\newtheorem{theorem}{Theorem}%[section]
\newtheorem{lemma}[theorem]{Lemma}
\newtheorem{definition}[theorem]{Definition}
\newtheorem{proposition}[theorem]{Proposition}
\newcommand{\argmax}{\operatornamewithlimits{argmax}}
\newcommand{\argmin}{\operatornamewithlimits{argmin}}
\begin{document}
\onehalfspacing

\tikzstyle voja grid=[style=help lines, color=gray, ultra thin, step=0.5cm]

\title{On Competing Wireless Service Providers}
\author{Vojislav Gaji\'{c}, Jianwei Huang, and Bixio Rimoldi
\thanks{V.~Gaji\'{c} and B.~Rimoldi are with Mobile Communications Laboratory, Ecole Polytechnique F\'ed\'erale de Lausanne, Lausanne, Switzerland, email: \{vojislav.gajic, bixio.rimoldi\}@epfl.ch. J.~Huang is with Department of Information Engineering, The Chinese University of Hong Kong, Shatin, Hong Kong, email:jwhuang@ie.cuhk.edu.hk.}}

\maketitle

\begin{abstract}

We consider a situation where wireless service providers compete for heterogenous wireless users. The users differ in their willingness to pay as well as in their individual channel gains. We prove existence and uniqueness of the Nash equilibrium for the competition of two service providers, for a generic channel model. Interestingly, the competition of two providers leads to a globally optimal outcome. We extend some of the results to the case where more than two providers are competing. Finally, we provide numerical examples that illustrate the effects of various parameters on the Nash equilibrium.
\end{abstract}

\section{Introduction and Related Work}
Due to the deregulation of telecommunication industry, one can imagine that in the future wireless users will not be contractually tied to a single service provider, but be free to switch in real time to the provider offering the best tradeoff of parameters. In this work, we consider a situation where wireless service providers want to earn profit by selling limited amount of wireless resources (e.g. bandwidth, downlink power) to a group of users. The users are rational economic agents who experience different channel conditions to the base stations of different providers and differ in their willingness to pay. The providers compete for the users by adjusting their resource prices. A user may join a provider with an inferior channel if the price of the resource is low enough.

The focus of our study is to understand the user-provider association and resource allocation in a general heterogenous network. The analysis is non-trivial in general. Consider a simplified case where there are $I$ users and $J$ base stations (belonging to different providers) distributed on a 2D plane, and the quality of a channel from a user to a base station's antenna only depends on the distance. For any prices proposed by the providers, we are interested in identifying regions of the plane with a property that users from the same region prefer to connect to the same provider. These regions of preference will differ based on the assumed utility and communication models, but in general they are non-convex (e.g., see Fig.~\ref{fig:quantity21}).  The shapes can be very irregular if we assume a more realistic wireless channel model including shadowing and fading. Furthermore, {finding the equilibrium state of the system in a naive way involves a search over $O(J^{I})$ possible choices of user-provider associations}. 

The key contributions of our work include:
\begin{itemize}

\item \emph{Network Model}: We propose a tractable network model that captures the heterogeneity of wireless users under a generic channel model. {Furthermore, we consider finitely many atomic users who have finite (non infinitesimal) demand so a single user's impact on the network equilibrium can not be ignored.}

\item \emph{Analysis of Nash Equilibrium}: For the duopoly case (competition of two providers), we prove existence and uniqueness of the Nash Equilibrium, {where at most one user will purchase resource from two providers simultaneously.} Moreover, the equilibrium maximizes the total network utility under a class of strictly increasing and concave utility functions. For a general oligopoly case (competition of more than two providers), we also obtain a partial and important characterization of the equilibrium state.

\item\emph{Reduction of Model Complexity}: {We introduce a metric transformation that turns the typical non-convex regions of preferences into convex regions in a different domain. Such convex characterization allows us to find the (integer) Nash equilibrium with only polynomial complexity in the number of users}.  
\end{itemize}

The past ten years have seen an ever increasing number of research that focuses on the application of game theory and pricing in analyzing network service providers, initiated by the work of Kelly \cite{Kelly:1997}. The majority of research in the wireless setting focuses on resource allocation within the same service provider (see for example \cite{Saraydar:2002}, \cite{Marbach:2002fk}, \cite{Chiang:2004}) and on the interaction between the users of one provider (\cite{Adlakha:2007}, \cite{Etkin:2005uq}, \cite{Huang:2006}). Perhaps surprisingly, only a few works focus on the competition between providers in a wireless setting. We can distinguish two types of competitions between wireless service providers: competing on behalf of the users (such as \cite{Zhou:2005} and  \cite{Grokop:2008}) and price competition to attract users (such as \cite{Zemlianov:2005}, \cite{Sengupta:2007} and \cite{Jia:2008}\footnote{Another related work is \cite{Felegyhazi:2007a} which considers two cellular providers competing for users by changing the strength of their pilot signals.}). The latter is the subject of our work.

Until recently, the heterogeneity of the users was largely ignored. The first work that explicitly takes into account the channel differences for different users is \cite{Inaltekin:2007}. To our knowledge, our work is the first one to consider the pricing competition of providers for users who are heterogenous in both willingness to pay and the channel quality for arbitrary channel coefficients. {Also, most previous work considered nonatomic users, i.e., each user's influence on the network is small and negligible \cite{Roughgarden:2005il}. This assumption may not be realistic in practice, which motivates us to study the atomic user case that involves resource splitting among networks (see Section \ref{sec:duopoly} for details).}

{After introducing the utility functions and communication models in Section \ref{sec:model}, we discuss the single provider case in Section \ref{sec:monopoly} and duopoly case in Section \ref{sec:duopoly}.  }We summarize our progress for the oligopoly case (more than two providers) at the end of Section \ref{sec:duopoly}. We present some numerical results in Section \ref{sec:numerics} and conclude in Section \ref{sec:conclusion}.

\section{Problem Formulation and Model}
\label{sec:model}

We consider a network with a set $\mathcal{J}=\{1,\ldots,J\}$ of service providers and a set $\mathcal{I}=\{1,\ldots,I\}$ of users. {Provider $j\in\mathcal{J}$ competes with other providers by selling a fixed amount $Q_{j}$ of a perfectly divisible resource to the users in set $\mathcal{I}$ at a unit price of $p_{j}$ with an objective of revenue maximization.  User $i \in \mathcal{I}$ experiences a channel gain $h_{ij}$ to the base station of provider $j$, drawn from some continuous distribution.} A user $i$ is free to purchase resource from a provider that offers him the highest value of utility $u_{ij}$ for all $j\in\mathcal{J}$ (c.f. \cite{Sengupta:2007}), defined by
\begin{align}
u_{ij}(p_{j},q_{ij})=a_{i}\log \left( 1+\frac{q_{ij}}{g_{i}(h_{ij})}\right)-p_{j}q_{ij},
\label{eqn:utility}
\end{align}
%In our earlier example the resource $Q$ is power, and $g_{i}(h_{ij})=\frac{\sigma_{i}^{2}}{|h_{ij}|^{2}}$.
where $a_{i}>0$ is the willingness to pay factor of user $i$, $q_{ij}$ is the amount of resource a user is purchasing from provider $j$, and $g_{i}(h)$ is the \emph{channel quality offset} function. The function $g_{i}(h)$ is decreasing and continuous in $h$; it accounts for the effect that buying the same amount of resource from different providers will have different effects on the quality of service actually obtained by the user due to the differences in the wireless channel quality.

Our choice of utility functions is not as limited as it may seem\footnote{Most of our results still hold if we let the utility function to be a strictly concave and increasing function of $q_{ij}$ (e.g. existence, uniqueness, and social optimality of the Nash equilibrium, ordering of the users); here we choose logarithmic function since we can express quantities of interest in closed form. We feel that the insights we gain by using this simplification justify it.}. Depending on the definition of resource $q_{ij}$ and the choice of function $g_{i}(h)$, the utility function $u_{ij}$ can have different physical meanings. {A more detailed discussion on the specific utility function we chose is given in Appendix \ref{app:util-function}.} 
\begin{itemize}
\item Example 1: Consider a situation where the providers are selling downlink power. Denote the set of users purchasing resource from provider $j$ as set $\mathcal{I}_{j}$. Provider $j$ can allocate resource subject to $\sum_{i\in\mathcal{I}_{j}}q_{ij}\leq Q_{j}$. By setting $g_{i}(h_{ij})=\frac{\sigma^{2}_{i}}{|h_{ij}|^{2}}$, the utility becomes $u_{ij}=a_{i}\log (1+\frac{q_{ij}|h_{ij}|^{2}}{\sigma^{2}_{i}})-p_{j}q_{ij}=a_{i}C_{ij}(q_{ij},h_{ij},\sigma_{i}^{2})-p_{j}q_{ij}$, where $C_{ij}(\cdot)$ is the channel capacity with gain $h_{ij}$ and noise variance $\sigma_{i}^{2}$ for a user with power constraint $q_{ij}$.
\item Example 2: Consider users buying the percentage of time $q_{ij}$ they are allowed to transmit exclusively on a channel, $\sum_{i\in\mathcal{I}_{j}}q_{ij}=1(=Q_{j})$, $j \in \mathcal{J}$. Assume that each user has a maximum power constraint $P_{i}$. Then, by setting  $g_{i}(h_{ij})=\frac{1}{\log (1+\frac{P_{i} |h_{ij}|^{2}}{\sigma^{2}})}=\frac{1}{C_{ij}}$, user's utility becomes $u_{ij}=a_{i} \log (1+q_{ij} C_{ij})-p_{j}q_{ij}$ (i.e. a user's utility is an increasing function of obtained rate, with diminishing returns). Similarly, our model can be used if providers are selling exclusive access to other types of resource (e.g., bandwidth, OFDM tones) and the maximum power of a user is fixed.
\end{itemize}

Here we assume users are price-takers and do not consider the impact of their choices on the providers' prices. Taking users' choices into consideration, a provider wants to maximize its revenue by optimizing its price. A very high price will drive the users to its competitor(s), and a very low price will lead to low revenue even if its resource is fully utilized. In the next section we begin our analysis by considering a single provider. This will allow us to get insight into the effects of supply and demand for a fixed set of users. In subsequent sections we will discuss how users change their associations due to provider competition.

\section{Monopoly Case}
\label{sec:monopoly}

Since we only consider one service provider here, the subscript $j$ will be dropped in this section. For a user $i$, we begin by finding the value of the resource $q_{i}^{\ast}(p)$ that maximizes $u_{i}(p, q_{i})$ as a function of the price $p$, and call this the \emph{demand} function of user $i$. Given the concavity of the utility function $u_{i}(p, q_{i})$ in $q_{i}$, it is enough to examine the first order condition $\frac{\partial u_{i}}{\partial q_{i}}=0$ and the boundary constraint $q_{i}\geq 0$, which lead to
\begin{align*}
q^{*}_{i}(p)=\argmax_{q_{i}\geq 0} u_{i}(p,q_{i})=\left(\frac{a_{i}}{p}-g_{i}(h_{i}) \right)^{+},
\end{align*}
where $(x)^{+}=\max(x,0)$.
Notice that user $i$ will have zero demand if the price is larger than $\frac{a_{i}}{g_{i}(h_{i})}$.

The total demand that the provider faces is $Q^{*}(p)=\sum_{i \in \mathcal{I}}q^{*}_{i}(p)$, where $\mathcal{I}$ is the set of all users. We define $\mathcal{I}^{+}(p)=\{i \in \mathcal{I}: \frac{a_{i}}{p}-g_{i}(h_{i}) >0\}$, to be the set of all users with strictly positive demand at price $p$. The total demand then can be rewritten as $Q^{*}(p)=\sum_{i\in\mathcal{I}^{+}(p)}q_{i}^{*}(p)=\sum_{i\in\mathcal{I}^{+}(p)}\left(\frac{a_{i}}{p}-g_{i}(h_{i}) \right)$. Since the provider cannot sell more resource than it has available, the total profit is
\begin{align*}
\Pi=\min \left(pQ,pQ^{*}(p)\right)=\min (pQ,\sum_{i\in\mathcal{I}^{+}(p)}\left( {a_{i}}-p g_{i}(h_{i}) \right) ),
\end{align*}

\begin{proposition} \label{prop1}The revenue $\Pi$ is maximized when $pQ=pQ^{*}(p)$, i.e., when the demand equals the supply. The corresponding price is
\begin{align}
p^{*}(\mathcal{I})=\frac{\sum_{i\in\mathcal{I}^{+}(p^{*})}a_{i}}{\sum_{i\in\mathcal{I}^{+}(p^{*})}g_{i}(h_{i})+Q}.
\label{eqn:opt-price}
\end{align}
\end{proposition}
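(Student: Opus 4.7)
The plan is to write the revenue as $\Pi(p) = \min\bigl(R_s(p), R_d(p)\bigr)$ with the supply-constrained branch $R_s(p) = pQ$ and the demand-constrained branch $R_d(p) = pQ^{*}(p)$, and to show that these two curves cross at a unique point which is exactly the maximizer. This reduces the problem to studying the monotonicity of each branch separately.

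First I would show that $R_s(p)=pQ$ is continuous and strictly increasing from $0$. For $R_d$, using the demand expression already derived, I write
\begin{align*}
R_d(p) \;=\; \sum_{i\in\mathcal{I}}\bigl(a_i - p\,g_i(h_i)\bigr)^{+},
\end{align*}
which is a sum of continuous, non-increasing, piecewise-linear functions of $p$, hence itself continuous and non-increasing. At $p\to 0^{+}$ every user is in $\mathcal{I}^{+}(p)$, giving $R_d(0^{+})=\sum_i a_i>0$, while $R_d(p)=0$ for $p\geq \max_i a_i/g_i(h_i)$. Together with $R_s(0)=0$ and $R_s(p)\to\infty$, this forces the two graphs to cross at exactly one price $p^{*}$.

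From the monotonicity I can then conclude that for $p<p^{*}$ we have $R_s(p)<R_d(p)$, so $\Pi(p)=R_s(p)$ is strictly increasing; and for $p>p^{*}$ we have $R_s(p)>R_d(p)$, so $\Pi(p)=R_d(p)$ is non-increasing. Hence the maximum of $\Pi$ is attained at $p=p^{*}$, where the supply-demand balance $pQ=pQ^{*}(p)$ holds. Solving this equality by substituting the expression $pQ^{*}(p)=\sum_{i\in\mathcal{I}^{+}(p^{*})}(a_i-p^{*}g_i(h_i))$ and isolating $p^{*}$ yields the announced formula \eqref{eqn:opt-price}.

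The only delicate point I expect is that $R_d$ has kinks at the thresholds $p = a_i/g_i(h_i)$, so $\mathcal{I}^{+}(p^{*})$ might in principle depend on which side of a kink $p^{*}$ sits. This is a notational rather than a mathematical difficulty: continuity of $R_d$ across each kink guarantees that the intersection price is well defined regardless, and the formula \eqref{eqn:opt-price} is self-consistent because any user $i$ with $a_i/g_i(h_i)=p^{*}$ contributes zero to both numerator (after cancellation) and is excluded from $\mathcal{I}^{+}(p^{*})$ by its strict inequality definition. No other genuine obstacle arises, so the argument is essentially a monotonicity-and-intersection computation.
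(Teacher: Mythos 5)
Your argument is correct and follows essentially the same route as the paper: both decompose the revenue as $\Pi(p)=\min\left(pQ,\,pQ^{*}(p)\right)$, use that $pQ$ is increasing while $pQ^{*}(p)$ is non-increasing to locate the maximum at their unique crossing, and then solve $p^{*}Q=\sum_{i\in\mathcal{I}^{+}(p^{*})}\left(a_{i}-p^{*}g_{i}(h_{i})\right)$ for the stated formula. Your added care about continuity, the kinks at $p=a_{i}/g_{i}(h_{i})$, and uniqueness of the intersection simply fills in details the paper leaves implicit.
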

{
The profit as a function of price is illustrated by the thick line in Figure \ref{fig:profit}.
\begin{figure}[h]% 
\centering
\parbox{2in}{% 
\begin{tikzpicture}[>=stealth, scale=1]
	\draw[voja grid] (-0.1,-0.1) grid (5.9,3.9);
	\draw (0,0) -- (5.85,3.9) (5.7,3.9) node[left] {$pQ$};
	\draw (0,3.8) -- (2,3) -- (3.5,2) -- (4.6,1) -- (5.5,0) (0,4) node[right] {$pQ^{*}$};
	\draw[->] (0,-.1) -- (0,4);
	\draw[->] (-.1,0) -- (6,0) node [right] {$p$};
	\draw (3.255,-0.05) -- (3.255,0.05) (3.255,0)  node[below] {$p^{*}(\mathcal{I})$};
	\draw[ultra thick] (0,0) -- (3.255,2.17)-- (3.5,2) -- (4.6,1) -- (5.5,0) (3.255,2.17)  node[above] {$\Pi^{*}$};
	\draw[ultra thick]  (1.8,1.5) node {$\Pi$};
\end{tikzpicture} 
\caption{Profit Maximization}% 
\label{fig:profit}}
\end{figure}
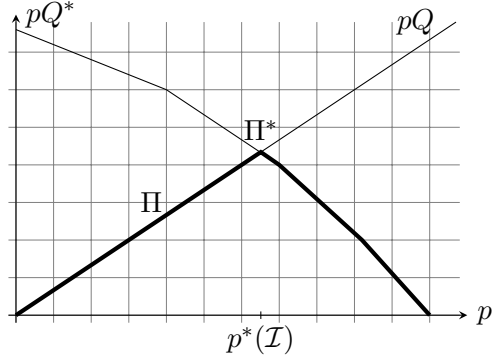

The term $pQ$ is an increasing function of the price. On the other hand, $pQ^{*}$ is decreasing in $p$ (each term in the summation is decreasing, and the set $I(p)$ is getting smaller as well). Hence, the optimal price is that for which the two terms are equal:
\begin{align}
p^{*}(\mathcal{I})=\frac{\sum_{\mathcal{I}^{+}(p^{*})}a_{i}}{\sum_{\mathcal{I}^{+}(p^{*})}g_{i}(h_{i})+Q}.
\label{eqn:opt-price}
\end{align}
}
Keep in mind that the price $p^{\ast}$ depends on the provider's total available resource $Q$. Proposition \ref{prop1} shows that if the provider charges a price higher than $p^{*}$, not all of the resource will be sold, which results in lower than optimal profits. On the other hand, a provider that charges a price $p<p^{*}$ will face a demand higher than the supply.  We call $p^{*}(\mathcal{I})$ the \emph{optimal} price in the monopoly case. The main characteristic of optimal price is that no resource is left unsold. The price $p^{*}$ is unique and we can compute it in $O(I)$ steps.  {The algorithm is given in Appendix \ref{app:opt-price}.} This is because the set $\mathcal{I}^{+}(p^{*})$ is known only once $p^{*}$ is determined, so $p^{*}$ in general cannot be computed in one calculation. The following lemma gives us a key property of the optimal price.

\begin{lemma}
For any two sets of users $\mathcal{I}$ and $\mathcal{I'}$ such that $\mathcal{I}\subset \mathcal{I'}$, $p^{*}(\mathcal{I'})\geq p^{*}(\mathcal{I})$.
\label{lem:monotone}
\end{lemma}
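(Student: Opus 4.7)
The plan is to exploit the characterization of $p^*(\mathcal{I})$ given in Proposition \ref{prop1}: the optimal price is the unique price at which aggregate demand equals supply. Define the aggregate demand function
\[
f_{\mathcal{I}}(p) := \sum_{i\in\mathcal{I}} q_i^*(p) = \sum_{i\in\mathcal{I}} \left( \frac{a_i}{p} - g_i(h_i) \right)^+,
\]
so that $p^*(\mathcal{I})$ is characterized implicitly by $f_{\mathcal{I}}(p^*(\mathcal{I})) = Q$. With this reformulation, the lemma becomes a statement about how the unique root of $f_{\mathcal{I}}(p) = Q$ shifts when the index set is enlarged.

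First, I would collect two monotonicity properties of $f_{\mathcal{I}}$. On the one hand, each summand is non-negative, continuous, and non-increasing in $p$, so $f_{\mathcal{I}}$ inherits these properties; moreover, at $p^*(\mathcal{I})$ we have $f_{\mathcal{I}}(p^*) = Q > 0$, so at least one user has strictly positive demand there, making $f_{\mathcal{I}}$ strictly decreasing in a neighborhood of $p^*(\mathcal{I})$. On the other hand, the \emph{set-monotonicity} is immediate: if $\mathcal{I}\subset\mathcal{I}'$, then $f_{\mathcal{I}'}$ differs from $f_{\mathcal{I}}$ by a sum of additional non-negative terms, so $f_{\mathcal{I}'}(p) \geq f_{\mathcal{I}}(p)$ for every $p>0$.

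Combining the two ingredients yields the conclusion in one line. Evaluating the set-monotonicity at $p = p^*(\mathcal{I})$,
\[
f_{\mathcal{I}'}(p^*(\mathcal{I})) \;\geq\; f_{\mathcal{I}}(p^*(\mathcal{I})) \;=\; Q \;=\; f_{\mathcal{I}'}(p^*(\mathcal{I}')).
\]
Since $f_{\mathcal{I}'}$ is non-increasing and strictly decreasing near $p^*(\mathcal{I}')$, the inequality $f_{\mathcal{I}'}(p^*(\mathcal{I})) \geq f_{\mathcal{I}'}(p^*(\mathcal{I}'))$ forces $p^*(\mathcal{I}) \leq p^*(\mathcal{I}')$, as claimed.

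I do not foresee a substantive obstacle; the only care needed is in handling the thresholds $p = a_i/g_i(h_i)$ where users enter or leave $\mathcal{I}^+(p)$, but since each summand $(a_i/p - g_i(h_i))^+$ vanishes continuously at its threshold, $f_{\mathcal{I}}$ is globally continuous in $p$ and the argument above goes through without modification.
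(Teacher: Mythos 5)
Your proof is correct, but it follows a genuinely different route from the paper's. The paper proves the lemma by induction on adding one user at a time: it works with the closed-form \emph{fictitious} price $p^{f}$ of Appendix B, shows algebraically (equation (\ref{eqn:addUser})) that adding a user $k$ with strictly positive demand at $p^{*}(\mathcal{I})$ raises the fictitious price, and then invokes the structure of Algorithm 1 (removing negative-demand users only increases the fictitious price) to pass from $p^{f}(\mathcal{I}')$ to $p^{*}(\mathcal{I}')$. You instead treat $p^{*}(\mathcal{I})$ as the root of the market-clearing equation $f_{\mathcal{I}}(p)=Q$ from Proposition \ref{prop1} and compare roots using two monotonicity facts: $f_{\mathcal{I}'}\geq f_{\mathcal{I}}$ pointwise, and $f_{\mathcal{I}'}$ is non-increasing and strictly decreasing just above $p^{*}(\mathcal{I}')$ because some user clears positive demand there. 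This is more elementary and arguably cleaner: it handles an arbitrary superset in one step (no one-user-at-a-time induction), never touches the fictitious price or the algorithm, and visibly generalizes to any continuous, non-increasing individual demand functions rather than relying on the specific form $(a_i/p-g_i(h_i))^{+}$. What the paper's route buys in exchange is a quantitative statement (an explicit expression for the price increase caused by one added user) and consistency with the computational machinery of Appendix B, both of which are reused later (e.g.\ in the fractional-equilibrium construction). Two small points to tighten: your argument presumes $\mathcal{I}\neq\emptyset$ so that a clearing price exists (the case $\mathcal{I}=\emptyset$, where the paper takes $p^{*}(\emptyset)=0$, is trivially covered since prices are non-negative), and the strict-decrease property you need is the one at $p^{*}(\mathcal{I}')$ (which you do assert), not the one at $p^{*}(\mathcal{I})$ stated earlier in your write-up.
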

\begin{proof}
{The proof of Lemma \ref{lem:monotone} is given in Appendix \ref{app:monotonicity}.} 
\end{proof}

{Lemma \ref{lem:monotone} corresponds to the economic intuition that more demand leads to a higher price.}
\section{Duopoly Case}
\label{sec:duopoly}
In this section we consider two competing providers. The solution concepts and analysis techniques can be generalized to the case of more than two providers (i.e., Oligopoly). When facing competition, a provider cannot assume that all users will buy resource from it and simply charge the optimal price to maximize profit. For any two prices $p_{1}$ and $p_{2}$ announced by the providers, the users will be divided into users that prefer provider 1 ($\mathcal{I}_{1}(p_{1},p_{2})$ or simply $\mathcal{I}_{1}$) and users that prefer provider 2 ($\mathcal{I}_{2}$). Providers can now try to charge the optimal prices according to (\ref{eqn:opt-price}) for their respective set of users, giving rise to new prices $p^{*}_{1}(\mathcal{I}_{1})$ and $p^{*}(\mathcal{I}_{2})$, but these new prices may give an incentive to some users to change their provider affiliation, which will lead to new prices and so on. Our objective is to characterize the stable state in which providers have no incentive to change the price, and users have no incentive to switch between providers.

We begin by explaining how users decide which provider to join.
We assume that, for a given proposed price, the users are able to obtain the amount of resource that maximizes their utility, i.e. $q^{*}_{ij}(p_{j})=\left(\frac{a_{i}}{p_{j}}-g_{i}(h_{ij})\right)^{+}$. \footnote{This assumption makes sense if we focus on equilibrium analysis, which is the case in this paper. It may be violated when we consider the dynamics to reach the equilibrium, which is our future work. }We are interested in finding the provider that gives the user a greater maximum utility: $j^{*}_{i}=\argmax_{j\in\{1,2\}} u_{ij}(p_{j},q^{*}_{ij}(p_{j}))$. The following result gives a simple criterion for finding $j^{*}_{i}$.
\begin{lemma}
For given prices $p_{1}$ and $p_{2}$, user $i$ will join provider $j^{*}_{i}=\argmin_{j\in \{1,2\}}p_{j}g_{i}(h_{ij})$. In the case of equality, w.l.o.g.\ we assume that the user joins provider 1.
\label{lem:users-pref}
\end{lemma}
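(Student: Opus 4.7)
The plan is to substitute the user's optimal demand $q^{*}_{ij}(p_{j})$ back into (\ref{eqn:utility}) and show that the resulting maximum utility $u^{*}_{ij} := u_{ij}(p_{j}, q^{*}_{ij}(p_{j}))$ depends on the provider $j$ only through the single scalar $x_{j} := p_{j} g_{i}(h_{ij})$, and that it is a (weakly) decreasing function of $x_{j}$. Once this monotonicity is established, the argmax over $j$ of $u^{*}_{ij}$ coincides with the argmin over $j$ of $x_{j}$, which is exactly the statement.

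First I would split on whether user $i$ has positive demand at price $p_{j}$. If $p_{j} g_{i}(h_{ij}) \geq a_{i}$, then by the demand formula $q^{*}_{ij} = 0$ and hence $u^{*}_{ij} = 0$. Otherwise, $q^{*}_{ij} = a_{i}/p_{j} - g_{i}(h_{ij}) > 0$, and a direct substitution into (\ref{eqn:utility}) gives, after simplification,
\begin{align*}
u^{*}_{ij} = a_{i}\log\!\left(\frac{a_{i}}{x_{j}}\right) - a_{i} + x_{j},
\end{align*}
which indeed depends on $(p_{j}, h_{ij})$ only through $x_{j} = p_{j} g_{i}(h_{ij})$.

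Second, I would verify that this expression is strictly decreasing in $x_{j}$ on $(0, a_{i}]$: differentiating yields $-a_{i}/x_{j} + 1 < 0$ for $x_{j} \in (0, a_{i})$, and the value at $x_{j} = a_{i}$ is $0$, matching the zero-demand regime smoothly. Combining the two cases, $u^{*}_{ij}$ is a continuous, weakly decreasing function of $x_{j}$ on $(0, \infty)$, strictly decreasing on $(0, a_{i}]$. Therefore $\argmax_{j\in\{1,2\}} u^{*}_{ij} = \argmin_{j\in\{1,2\}} p_{j} g_{i}(h_{ij})$, and the stated tie-breaking rule resolves the (generically nonoccurring) case $x_{1} = x_{2}$.

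The proof is essentially a one-variable calculus exercise once one recognizes that $x_{j} = p_{j} g_{i}(h_{ij})$ is the right summary statistic. The only mild subtlety is gluing the two cases together at the boundary $x_{j} = a_{i}$; fortunately the closed-form expression above agrees with $0$ there, so no separate argument is needed for users who would be indifferent between buying and not buying at one of the providers.
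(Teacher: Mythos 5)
Your proof is correct, and it rests on the same key observation as the paper's: after substituting the optimal demand, the indirect utility depends on provider $j$ only through the product $x_{j}=p_{j}g_{i}(h_{ij})$. Where you differ is in the packaging. The paper compares the two providers directly, computing $u_{i1}-u_{i2}$ and proving positivity via the elementary bound $\log(1-x)<-x$ on $(0,1)$, and then disposes of the zero-demand configurations by a separate contradiction argument (showing $q^{*}_{i1}=0$, $q^{*}_{i2}>0$ is incompatible with $p_{1}g_{i}(h_{i1})<p_{2}g_{i}(h_{i2})$). You instead write the indirect utility as the single-variable function $x\mapsto a_{i}\log(a_{i}/x)-a_{i}+x$ on $(0,a_{i}]$, extended by $0$ for $x\ge a_{i}$, and check by differentiation that it is continuous, strictly decreasing up to $a_{i}$, and constant afterwards; this absorbs the boundary and zero-demand cases in one stroke, which is arguably cleaner than the paper's case analysis. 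One small precision point: when both $x_{1}$ and $x_{2}$ exceed $a_{i}$, both indirect utilities equal $0$, so the argmax is not a singleton and does not literally equal the argmin; but the user is then indifferent (and buys nothing), so joining the argmin provider is still utility-maximizing, which is exactly the situation the paper, too, resolves by convention. This does not affect the validity of your argument.
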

{\begin{proof}
The proof of Lemma \ref{lem:users-pref} is given in Appendix \ref{app:easyUserCondition}.
\end{proof}}

\noindent
It is clear that the selection of provider can be made by a simple ordering of $p_{j}g_{i}(h_{ij})$ and is uniquely determined for fixed prices. We define the set $\mathcal{I}_{j}(p_{1},p_{2})$ as the set of users that prefer provider $j$ to the competing provider $\bar{j}$:
\begin{align*}
\mathcal{I}_{j}(p_{1},p_{2})=\{i\in \mathcal{I}: p_{j}g_{i}(h_{ij}) <  p_{\bar{j}}g_{i}(h_{i\bar{j}})\}.
\end{align*}
There is an alternative and more informative way of expressing $\mathcal{I}_{j}$. Let us define $\alpha_{i} ={g_{i}(h_{i1})}/{g_{i}(h_{i2})}$, and w.l.o.g.\ assume that users are ordered such that $\alpha_{i}<\alpha_{i+1}$ for all $i \in \{1,\ldots,I-1\}$\footnote{Since $g(\cdot)$ depends on the channel condition which is drawn from a continuous distribution, then $Pr(\alpha_{i}= \alpha_{j})=0$ for some $i\neq j$.}. With some abuse of notation, we now write simply $\mathcal{I}_{j}(\frac{p_{2}}{p_{1}})$ instead of $\mathcal{I}_{j}(p_{1},p_{2})$ and have the following equivalent characterization of the sets:
\begin{align}
\mathcal{I}_{1}\left(\frac{p_{2}}{p_{1}}\right)=&\{i\in \mathcal{I}: \alpha_{i} \leq \frac{p_{2}}{p_{1}}  \}\; \text{ and } \; \mathcal{I}_{2}\left(\frac{p_{2}}{p_{1}}\right)=\{i\in \mathcal{I}: \alpha_{i} > \frac{p_{2}}{p_{1}}  \}. \label{eqn:cut}
\end{align}
Hence, user $i$ can be characterized by the point $\alpha_{i}$ on the real line, and the price ratio $\nu=\frac{p_{2}}{p_{1}}$ represents a cut, as illustrated in Figure  \ref{fig:partition}. Users on the left of the cut (users $1$ to $k$) choose provider 1, and users on the right of the cut (users $k+1$ to $I$) choose provider 2.

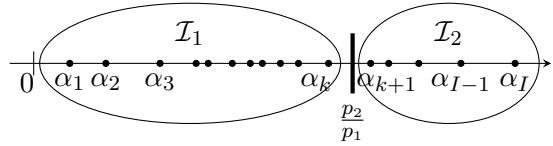
\begin{figure}[h]%
\centering
\parbox{3.6in}{%
\begin{tikzpicture}[>=stealth, scale=0.8]
\foreach \x in {0, 0.6,1.5,2.1,2.3,2.7,3,3.2,3.5,3.8,4.3,5,5.3,5.8,6.5,7.4}
	\draw[fill=black] (\x,0) circle  (0.05cm);
	\draw (0,0) node[below]{$\alpha_{1}$};
	\draw (0.6,0) node[below]{$\alpha_{2}$};
	\draw (1.5,0) node[below]{$\alpha_{3}$};
	\draw (4.1,0) node[below]{$\alpha_{k}$};
	\draw (5.27,0) node[below]{$\alpha_{k+1}$};
	\draw (6.5,0) node[below]{$\alpha_{I-1}$};
	\draw (7.4,0) node[below]{$\alpha_{I}$};	
	\draw (2,0) ellipse (2.5cm and 1cm);
	\draw (6.3,0) ellipse (1.5cm and 1cm);
	\draw[<-] (8,0) -- (-1,0) node[below right]{$0$};
	\draw(-0.6,-0.2) -- (-0.6,0.2);
	\draw (2,0.5) node{$\mathcal{I}_{1}$};
	\draw (6.3,0.5) node{$\mathcal{I}_{2}$};
\draw[ultra thick] (4.7,0.5) -- (4.7,-0.5) node[below]{$\frac{p_{2}}{p_{1}}$};
\end{tikzpicture}
\caption{Partitioning of users according to their preferences}%
\label{fig:partition}}%
\end{figure}

There are two implications of Lemma \ref{lem:users-pref} and Figure \ref{fig:partition}.
\begin{itemize}
\item \emph{Dimension reduction}: in the duopoly case, each user is characterized by two parameters:  $g_{i}(h_{i1})$ and $g_{i}(h_{i2})$. Lemma \ref{lem:users-pref} allows us to summarize them into a single parameter $\alpha_{i}$, and obtain the user division based on the ordering of this single parameter. This can be generalized to the oligopoly case where we can reduce the dimensionality from $J$ to $J-1$.
\item \emph{Region convexification}: the regions of preferences shown in Fig.~\ref{fig:partition} are always convex, whereas the physical regions of user-provider association in the 2D plane can be non-convex (e.g., see Fig.~\ref{fig:quantity21}). Convexification facilitates the analysis, and this can also be generalized to the oligopoly case.
\end{itemize}

We are now ready to define the Wireless service provider Competition Game (WCG), which is as a one-shot simultaneous move game with complete information. The two providers are the players. Each player $j$ chooses its price $p_{j}$ from the strategy set $\mathcal{P}_{j}=[0,\infty)$, and achieves a utility calculated as the total revenue $p_{j}\max\left(Q_{j},\sum_{i\in\mathcal{I}_{j}(p_{2}/p_{1})} q_{ij}^{\ast}(p_{j})\right)$.

We are interested in finding out if there exists an equilibrium situation, defined as follows:
\begin{definition}
A Nash equilibrium (NE) for the WCG is a pair of prices $(p^{NE}_{1},p^{NE}_{2})$ such that neither provider can make more profit by unilaterally changing its price.
\end{definition}

Next we describe an algorithm that can find the Nash equilibrium if it exists. We vary the price ratio parameter $\nu=p_{2}/p_{1}$ continuously from $0$ to $\alpha_I$. When $\nu = 0$ all users join provider $2$; when $\nu = \alpha_{I}$ all users join provider $1$.  As $\nu$ varies from $0$ to $\alpha_I$, we generate all possible splits of the user population into subsets $\mathcal{I}_1 (\nu)$ (users that prefer provider $1$) and $\mathcal{I}_2 (\nu)$ (users that prefer provider $2$). Here $\nu$ should be seen as a parameter to generate the partitions that are of potential interest to us and not as an actual price ratio. We define the \emph{optimal price ratio} as
\begin{equation}\label{eq:ratio}
 \mu(\nu)=\frac{p^*_{2}(\mathcal{I}_2 (\nu))}{p^* _{1}(\mathcal{I}_1 (\nu))},
\end{equation}
where the optimal prices $p_{j}^{\ast}s$ are calculated according to (\ref{eqn:opt-price}). If the user partitions generated by $\nu$ are the same as those generated by the optimal price ratio $ \mu(\nu)$ induced by $\nu$, then the optimal price ratio generates stable user partitions.  Figures \ref{fig:MCE} and \ref{fig:noMCE} plot a sample function $\mu(\nu)$ and $f(\nu)=\nu$.  Notice that $\mu(\nu)$ is a piece-wise constant and non-increasing function, and it changes value at $\nu = \alpha_i$, $i = 1, \ldots, I$.

\begin{figure}[h]%
\centering
\parbox{2.5in}{%
\begin{tikzpicture}[>=stealth, scale=0.7]

\filldraw[fill=black] (0.5,0) circle (1pt)  ++(0,0) node [below, scale=0.7] {$\alpha_{1}$};
%\filldraw[fill=black] (1,0) circle (1pt) ++(0,-0.03) node [below, scale=0.6] {$\alpha_{2}$};
\filldraw[fill=black] (3,0) circle (1pt) ++(-0.15,0) node [below, scale=0.7] {$\alpha_{k}$};
\filldraw[fill=black] (3.5,0) circle (1pt) ++(0,0) node [below, scale=0.7] {$\alpha_{k+1}$};
%\filldraw[fill=black] (5,0) circle (1pt) ++(0,-0.12) node [below, scale=0.7] {$\alpha_{I-1}$};
\filldraw[fill=black] (5.5,0) circle (1pt) ++(0,-0.02) node [below, scale=0.7] {$\alpha_{I}$};
\draw[-] (0,0) -- (6,3.8) node[left] {$f(\nu)=\nu$};

\filldraw[fill= black] (1,3.8) -- (0.5,3.8) circle (1pt)  ;
\filldraw[fill= black] (1.3,3.5) -- (1,3.5) circle (1pt)  ;
\filldraw[fill= black] (3,2.3) -- (2.6,2.3) circle (1pt)  ;
\filldraw[fill= black] (3.5,2) -- (3,2) circle (1pt)  ;
\filldraw[fill= black] (3.8,1.7) -- (3.5,1.7) circle (1pt)  ;
\filldraw[fill= black] (5.5,0.5) -- (5,0.5) circle (1pt)  ;
\filldraw[fill= black] (5.5,0) circle (1pt) ;

\filldraw[fill= black] (-0.05,3.8)-- ++(0.1,0) node [left, scale=0.7] {$\mu(\alpha_{1})$} ;
%\filldraw[fill= black] (-0.05,3.5)-- ++(0.1,0) node [left, scale=0.7] {$\mu(\alpha_{2})$};
%\filldraw[fill= black] (-0.05,2.3)-- ++(0.1,0) node [left, scale=0.7] {$\mu(\alpha_{k-1})$};
\filldraw[fill= black] (-0.05,2)-- ++(0.1,0)  node [left, scale=0.7] {$\mu(\alpha_{k})$};
%\filldraw[fill= black] (-0.05,1.7)-- ++(0.1,0) node [left, scale=0.7] {$\mu(\alpha_{k+1})$};
%\filldraw[fill= black] (-0.05,0.5)-- ++(0.1,0) node [left, scale=0.7] {$\mu(\alpha_{I-1})$};
\filldraw[fill= black] (-0.05,0)-- ++(0.1,0)  node [left, scale=0.7] {$\mu(\alpha_{I})$};

\filldraw[fill= black] (1.5,3.1) circle (0.5pt);
\filldraw[fill= black] (1.8,2.9) circle (0.5pt);
\filldraw[fill= black] (2.1,2.7) circle (0.5pt);

\filldraw[fill= white] (4.2,1.2) circle (0.5pt);
\filldraw[fill= white] (4.4,1.05) circle (0.5pt);
\filldraw[fill= white] (4.6,0.9) circle (0.5pt);

\draw[->] (-0.2,0) -- (6.2,0) node[right] {$\nu$};
\draw[->] (0,-0.2) -- (0,4.2) node[above] {$\mu(\nu)$};
\end{tikzpicture}
\caption{Integer Market Clearing Equilibrium}%
\label{fig:MCE}}%
\qquad \qquad
\begin{minipage}{2.5in}%
\begin{tikzpicture}[>=stealth, scale=0.7]

\filldraw[fill=black] (0.5,0) circle (1pt)  ++(0,0) node [below, scale=0.7] {$\alpha_{1}$};
%\filldraw[fill=black] (1,0) circle (1pt) ++(0,-0.03) node [below, scale=0.6] {$\alpha_{2}$};
\filldraw[fill=black] (3,0) circle (1pt) ++(-0.15,0) node [below, scale=0.7] {$\alpha_{l-1}$};
\filldraw[fill=black] (3.5,0) circle (1pt) ++(0,0) node [below, scale=0.7] {$\alpha_{l}$};
%\filldraw[fill=black] (5,0) circle (1pt) ++(0,-0.12) node [below, scale=0.6] {$\alpha_{I-1}$};
\filldraw[fill=black] (5.5,0) circle (1pt) ++(0,-0.02) node [below, scale=0.7] {$\alpha_{I}$};
\draw[-] (0,0) -- (6,3.8) node[left] {$f(\nu)=\nu$};

\filldraw[fill= black] (1,3.8) -- (0.5,3.8) circle (1pt)  ;
\filldraw[fill= black] (1.3,3.5) -- (1,3.5) circle (1pt)  ;
\filldraw[fill= black] (3,2.5) -- (2.6,2.5) circle (1pt)  ;
\filldraw[fill= black] (3.5,2.3) -- (3,2.3) circle (1pt)  ;
\filldraw[fill= black] (3.8,2) -- (3.5,2) circle (1pt)  ;
\filldraw[fill= black] (5.5,0.5) -- (5,0.5) circle (1pt)  ;
\filldraw[fill= black] (5.5,0) circle (1pt) ;

\filldraw[fill= black] (-0.05,3.8)-- ++(0.1,0) node [left, scale=0.7] {$\mu(\alpha_{1})$} ;
%\filldraw[fill= black] (-0.05,3.5)-- ++(0.1,0) node [left, scale=0.7] {$\mu(\alpha_{2})$};
%\filldraw[fill= black] (-0.05,2.5)-- ++(0.1,0) node [left, scale=0.7] {$\mu(\alpha_{l-2})$};
\filldraw[fill= black] (-0.05,2.3)-- ++(0.1,0)  node [left, scale=0.7] {$\mu(\alpha_{l-1})$};
\filldraw[fill= black] (-0.05,2)-- ++(0.1,0) node [left, scale=0.7] {$\mu(\alpha_{l})$};
%\filldraw[fill= black] (-0.05,0.5)-- ++(0.1,0) node [left, scale=0.7] {$\mu(\alpha_{I-1})$};
\filldraw[fill= black] (-0.05,0)-- ++(0.1,0)  node [left, scale=0.7] {$\mu(\alpha_{I})$};

\filldraw[fill= black] (1.5,3.1) circle (0.5pt);
\filldraw[fill= black] (1.8,2.9) circle (0.5pt);
\filldraw[fill= black] (2.1,2.7) circle (0.5pt);

\filldraw[fill= white] (4.2,1.2) circle (0.5pt);
\filldraw[fill= white] (4.4,1.05) circle (0.5pt);
\filldraw[fill= white] (4.6,0.9) circle (0.5pt);

\draw[->] (-0.2,0) -- (6.2,0) node[right] {$\nu$};
\draw[->] (0,-0.2) -- (0,4.2) node[above] {$\mu(\nu)$};
\end{tikzpicture}
\caption{Fractional Market Clearing Equilibrium}%
\label{fig:noMCE}%
\end{minipage}%
\end{figure}%

Figure \ref{fig:MCE} shows the case when $\mu(\cdot)$ has a fixed point (there exists $\nu^{*}$ such that $\mu(\nu^{*})=\nu^{*}$). This is the only point that is stable, and from Figures \ref{fig:MCE} and \ref{fig:noMCE} it is clear that there can be at most one such point. Hence, we can distinguish between two cases: (i) Type 1 as in  Fig.~\ref{fig:MCE}: there is a unique fixed point for $\mu(\cdot)$, and (ii) Type 2 as in Fig.~\ref{fig:noMCE}: there is no fixed point for $\mu(\cdot)$.

We first consider Type 1.

\begin{proposition}{\bf{[Integer Market Clearing Equilibrium]}}
{If there exist $\nu^{*}$ and $k$ such that  $\nu^{*}\in [\alpha_{k},\alpha_{k+1})$ and $\mu(\nu^{*})=\nu^{*}$,} then the prices $(p^{*}_{1}(\{1,\ldots, k \}),p^{*}_{2}(\{k+1,\ldots,I\}))$ form a Nash Equilibrium. We call this equilibrium the integer \emph{market clearing equilibrium} (MCE) and the corresponding prices \emph{equilibrium prices}.
\label{pr:integerMCE}
\end{proposition}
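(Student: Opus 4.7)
The plan is to verify directly that neither provider can gain by a unilateral deviation from the candidate price pair. Let $\mathcal{I}_1 = \{1,\ldots,k\}$, $\mathcal{I}_2 = \{k+1,\ldots,I\}$, and write $p_j^\ast := p_j^\ast(\mathcal{I}_j)$ for the candidate prices. The fixed-point hypothesis $\mu(\nu^\ast) = \nu^\ast$ says precisely that $p_2^\ast/p_1^\ast = \nu^\ast \in [\alpha_k, \alpha_{k+1})$, so by Lemma~\ref{lem:users-pref} the prices $(p_1^\ast, p_2^\ast)$ indeed induce the partition $(\mathcal{I}_1,\mathcal{I}_2)$, and by Proposition~\ref{prop1} each provider's market clears with revenue $p_j^\ast Q_j$. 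The task is therefore to show that for any $p_1' > 0$ the revenue $\Pi_1'$ of provider 1 (with $p_2 = p_2^\ast$ held fixed) satisfies $\Pi_1' \le p_1^\ast Q_1$; the argument for provider 2 will be symmetric.

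Given a deviation $p_1'$, Lemma~\ref{lem:users-pref} gives the induced user set as some prefix $\mathcal{I}_1(p_1') = \{1,\ldots,m\}$, and I split on whether the partition shrinks or stays ($m \le k$) or grows ($m > k$). In the first case the deviation revenue equals $p_1' \min(Q_1, D_{\mathcal{I}_1(p_1')}(p_1'))$ with demand taken over the smaller set; this is bounded above by the monopoly-optimal revenue $p_1^\ast(\mathcal{I}_1(p_1'))\, Q_1$ from Proposition~\ref{prop1}, and Lemma~\ref{lem:monotone} then gives $p_1^\ast(\mathcal{I}_1(p_1')) \le p_1^\ast(\mathcal{I}_1) = p_1^\ast$, so $\Pi_1' \le p_1^\ast Q_1$ as needed. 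In the second case, $m > k$ forces $p_2^\ast/p_1' \ge \alpha_{k+1} > \nu^\ast$, hence $p_1' < p_1^\ast$; at any price strictly below the market-clearing price for $\mathcal{I}_1$, the demand from $\mathcal{I}_1$ alone already exceeds $Q_1$, and enlarging the user set only raises it further, so $\min(Q_1, D(p_1')) = Q_1$ and $\Pi_1' = p_1' Q_1 < p_1^\ast Q_1$.

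The analogous argument for provider 2 reverses the direction: raising $p_2$ shrinks $\mathcal{I}_2$ to a subset, where Proposition~\ref{prop1} and Lemma~\ref{lem:monotone} cap the revenue at $p_2^\ast Q_2$; lowering $p_2$ enlarges $\mathcal{I}_2$ and forces demand to exceed $Q_2$ at a price $p_2' < p_2^\ast$, again giving revenue $p_2' Q_2 < p_2^\ast Q_2$. Since the candidate prices are monopoly-optimal for the candidate partition and no deviation improves on them, they constitute a Nash equilibrium.

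The step I expect to require most care is the ``shrinking-partition'' branch: the deviation price $p_1'$ is generally not itself the monopoly-optimal price for the smaller induced set $\mathcal{I}_1(p_1')$, so I cannot compare revenues at different prices directly. What makes the argument work is that Proposition~\ref{prop1} gives a \emph{uniform} upper bound on revenue over all prices for any fixed user set, so one only needs to compare these upper bounds across sets via Lemma~\ref{lem:monotone}. Once this is in place, the piecewise-constant structure of the induced partition as a function of $p_1'$ is exactly what allows the two-case split above to cover every possible unilateral deviation, and the other case (partition grows) is disposed of by the elementary observation that selling out at a lower price yields strictly lower revenue.
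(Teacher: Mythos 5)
Your proof is correct, and it is in fact more complete than the argument the paper gives. The paper's proof only verifies consistency of the candidate point: the fixed-point condition places $p^{*}_{2}(\{k+1,\ldots,I\})/p^{*}_{1}(\{1,\ldots,k\})$ strictly between $\alpha_{k}$ and $\alpha_{k+1}$, so the induced partition is exactly $(\{1,\ldots,k\},\{k+1,\ldots,I\})$, each provider is charging the monopoly-optimal price for its own set, and demand equals supply; it then simply asserts that no provider has an incentive to deviate, without discussing deviations that alter the user partition. You supply precisely that missing deviation analysis: for deviations under which the induced set stays or shrinks you invoke the uniform revenue cap $p^{*}(\mathcal{I})Q$ over all prices for a fixed user set (Proposition \ref{prop1}) together with the monotonicity of the optimal price under set inclusion (Lemma \ref{lem:monotone}) -- a lemma the paper proves but never explicitly uses in its own proof of Proposition \ref{pr:integerMCE} -- and for deviations that enlarge the set you note that attracting user $k+1$ forces $p_{1}'<p_{1}^{*}$, at which price the original set's demand already exceeds $Q_{1}$, so the deviator sells out at a strictly lower unit price. (Minor points: the paper's game definition writes $\max(Q_{j},\cdot)$, evidently a typo for $\min$, and your use of $\min$ matches the monopoly-section profit $\min(pQ,pQ^{*}(p))$; also, your observation that the shrinking-set bound compares revenue caps across sets rather than revenues at a common price is exactly the right way to avoid the pitfall there.) So the core idea -- the candidate prices are market-clearing monopoly optima for the induced partition -- is the same as the paper's, but your two-case treatment of unilateral deviations turns the paper's assertion into an actual proof.
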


{\begin{proof}
The condition $\mu(\nu^{*})=\nu^{*}$, where $\nu^{*}\in[\alpha_{k},\alpha_{k+1})$ is equivalent to having $\alpha_{1}\cdots <\alpha_{k}<\frac{p^{*}_{2}(\{k+1,\ldots,I\})}{p^{*}_{1}(\{1,\ldots,k\})}< \alpha_{k+1}<\ldots<\alpha_{I}$ (see Figure~\ref{fig:MCE}). Then users $1$ through $k$ prefer provider 1, and users $k+1$ through $I$ prefer provider 2. Also, the provider 1 is charging the optimal price $p_{1}= p^{*}_{1}(\{1,\ldots,k\})$, so the demand the provider is facing is $Q_{1}$, i.e. exactly equal to the supply. Hence, the price is profit maximizing and the provider has no incentive to deviate. Similar reasoning is true for provider 2. Finally, since the providers are charging optimal prices, each user is getting the utility maximizing amount of resource from his preferred provider.
\end{proof}}
The market clearing equilibrium has some nice properties: the supply equals the demand for each provider, the providers maximize their profits, and the users are getting their utility functions maximized by their preferred provider. {Since $\mu(\nu)$ is a piece-wise constant function, we can show that the fixed point $\nu^{\ast}$ can be found in at most $I+1$ steps, where each step involves examining one of the following $I+1$ intervals: $[0,\alpha_{1}),[\alpha_{1},\alpha_{2}),\ldots,[\alpha_{I-1},\alpha_{I}),$ and $[\alpha_{I},\infty)$. This means that the search algorithm only has a complexity polynomial in the number of users.}

The Type 2 case, pictured in Figure \ref{fig:noMCE}, implies that there exists an index $ l$ for which  $\mu(\alpha_{l-1})>\alpha_{l}>\mu(\alpha_{l})$, i.e.:
\begin{align}
\frac{p^{*}_{2}(\{ l, l+1, \ldots,I\})}{p^{*}_{1}(\{1,2,\ldots, l-1\})} \stackrel{(a)}{>} \alpha_{ l} \stackrel{(b)}{>} \frac{p^{*}_{2}(\{ l+1, \ldots,I\})}{p^{*}_{1}(\{1,2,\ldots, l\})}.
\label{eqn:teeter}
\end{align}
The inequality $(a)$ in equation (\ref{eqn:teeter}) implies that user $ l$ prefers provider 1 if the optimal price ratio is computed assuming he is associated with provider 2, while inequality $(b)$ implies that he prefers provider 2 if he is associated with provider 1.  Hence, user $l$ has an incentive to switch away from a provider as soon as he joins it. In this case, we call user $l$ the \emph{undecided} user.

The undecided user problem arises when the influence of the undecided user on the equilibrium price is non-negligible\footnote{\noindent As mentioned in the introduction, in our model the users do not consider their own impact on the system when choosing the best provider; users are price-takers \cite{Kelly:1997}, and not price anticipators \cite{Johari:2004}.}. For example, in any situation where $I=1$ (there is only one user), being the user of provider 1 renders the other provider more desirable since $p^{*}_{2}(\o)=0$.
An undecided user can exists even for a large number of users, although the impact of a single user on the equilibrium price is smaller in that case.

Mathematically, the undecided user issue arises due to the discontinuity of the optimal price ratio in (\ref{eq:ratio}). This is because each user can only purchase resource from one provider.

If we relax this assumption and allow user $l$ to get resource from both providers, then he can split his demand in such a way that $\frac{p_{2}}{p_{1}}=\alpha_{l}$ and user $l$ is indifferent as to which provider he purchases resource from $\left(p_{1}g_{l}(h_{l1})=p_{2}g_{l}(h_{l2})\right)$. The following theorem asserts that there is a unique way in which the undecided user splits his demand to ensure market stability.
\begin{proposition}{\bf{[Fractional Market Clearing Equilibrium]}} Assume that there exists a user $l$ such that $\mu(\alpha_{l-1})>\alpha_{l}>\mu(\alpha_{l})$ and this user can purchase the resource from both providers. In particular, user $l$ purchases $\epsilon q^{*}_{l1}(p_{1})$ resource from provider 1 and $(1-\epsilon) q^{*}_{l2}(p_{2})$ from provider 2. Then there exist unique $p^{*}_{1}$, $p^{*}_{2}$ and $\epsilon^{*}$ such that $\frac{p^{*}_{1}}{p^{*}_{2}}=\alpha_{l}$, the total demand equals the total supply for each provider, and each user obtains the resource maximizing utility from its preferred provider.
\label{pro:eps-equilib}
\end{proposition}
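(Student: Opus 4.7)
The plan is to parametrize the undecided user's split by $\epsilon\in[0,1]$ and reduce the proposition to a one-dimensional intermediate-value argument. For each $\epsilon$, consider the hypothetical scenario in which provider 1 faces the demand of users $\{1,\ldots,l-1\}$ plus an $\epsilon$-fraction of user $l$'s demand, while provider 2 faces the demand of users $\{l+1,\ldots,I\}$ plus a $(1-\epsilon)$-fraction of user $l$. Proposition \ref{prop1} applied to each provider gives uniquely defined clearing prices $\tilde p_1(\epsilon)$ and $\tilde p_2(\epsilon)$, with closed form analogous to (\ref{eqn:opt-price}), where the $\epsilon$-weighted contribution of user $l$ appears in the numerator (as $\epsilon a_l$) and in the denominator (as $\epsilon g_l(h_{l1})$), and symmetrically for provider 2. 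Define $\phi(\epsilon)=\tilde p_2(\epsilon)/\tilde p_1(\epsilon)$; a Nash equilibrium of the stated form corresponds exactly to a solution of $\phi(\epsilon^{*})=\alpha_l$.

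The core of the argument is to show that $\phi$ is continuous and strictly monotone on $[0,1]$, so it hits $\alpha_l$ at a unique point. Continuity follows from the closed-form expression: $\tilde p_j(\epsilon)$ is a ratio of linear functions of $\epsilon$ on each subinterval on which the set of positively-demanding users is constant, and one may check continuity across the finitely many breakpoints directly. Strict monotonicity is the fractional analog of Lemma \ref{lem:monotone}: increasing $\epsilon$ strictly increases $\tilde p_1(\epsilon)$ (and strictly decreases $\tilde p_2(\epsilon)$) provided the corresponding marginal contribution of user $l$ is positive, i.e.\ $a_l/\tilde p_j > g_l(h_{lj})$. Hence $\phi$ is strictly decreasing wherever user $l$ is actually present in both markets. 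For the endpoint evaluation, $\phi(0)$ coincides with $\mu(\alpha_{l-1})$ and $\phi(1)$ with $\mu(\alpha_l)$; by the Type 2 hypothesis (\ref{eqn:teeter}), $\phi(0)>\alpha_l>\phi(1)$. The intermediate value theorem then produces a unique $\epsilon^{*}\in(0,1)$ with $\phi(\epsilon^{*})=\alpha_l$, and we set $p_j^{*}=\tilde p_j(\epsilon^{*})$.

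It remains to check that $(p_1^{*},p_2^{*},\epsilon^{*})$ satisfies every equilibrium requirement. By construction $p_2^{*}/p_1^{*}=\alpha_l$, i.e.\ $p_1^{*}g_l(h_{l1})=p_2^{*}g_l(h_{l2})$, so by Lemma \ref{lem:users-pref} user $l$ is genuinely indifferent and any split $\epsilon\in[0,1]$ is utility-optimal for him. Every other user has $\alpha_i$ strictly on one side of $\alpha_l$, so Lemma \ref{lem:users-pref} assigns him uniquely to the correct provider, where he obtains his utility-maximizing quantity at the posted price. Markets clear by definition of $\tilde p_j(\epsilon^{*})$, and neither provider can increase revenue by unilateral deviation since each is already charging the optimal price for its own user set (Proposition \ref{prop1}). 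The main obstacle I anticipate is establishing the strict monotonicity of $\phi$ rigorously: one must rule out the degenerate possibility that at the candidate prices user $l$ has zero demand at one of the providers, since otherwise $\tilde p_j(\epsilon)$ would be locally constant in $\epsilon$. This is ruled out using the strict inequalities in (\ref{eqn:teeter}), which force the clearing price to actually depend on user $l$'s participation, and hence force $a_l/\tilde p_j(\epsilon)>g_l(h_{lj})$ throughout the relevant range of $\epsilon$.
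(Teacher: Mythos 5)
Your proposal is correct and follows essentially the same route as the paper's proof: define $\epsilon$-weighted fractional clearing prices (with $\epsilon a_l$ and $\epsilon g_l(h_{lj})$ in the formula of Proposition \ref{prop1}), handle users whose demand vanishes via a piecewise definition that remains continuous and strictly monotone, note the endpoints equal $\mu(\alpha_{l-1})$ and $\mu(\alpha_l)$ so the Type 2 inequalities plus the intermediate value theorem yield a unique $\epsilon^*$, and then verify indifference of user $l$ and correct assignment of the others via Lemma \ref{lem:users-pref}. Even your closing concern (that user $l$ must have strictly positive demand at both providers, forced by the strict inequalities in (\ref{eqn:teeter})) mirrors the paper's own remark, so no gap remains.
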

{\begin{proof}
The proof of Proposition \ref{pro:eps-equilib} is given in Appendix \ref{app:eps-equilib}.
\end{proof}
}

Notice that the fractional equilibrium requires somewhat more than the definition of the Nash equilibrium: in addition to charging the correct prices, the providers must split the load of the undecided user in the correct ratio $\epsilon^{*}$. In practice this may be difficult to realize, but the fractional equilibrium is still significant in that it provides a generalization of the Nash equilibrium to the Type 2 case. The condition that a user may purchase from both providers may seem to be at odds with the exposition so far. However, if we assume from the beginning that the utility of a user is $u_{i} \stackrel{\triangle}{=}a_{i}\log \left( 1+\sum_{j\in\mathcal{J}}\frac{q_{ij}}{g_{i}(h_{ij})}\right)-\sum_{j\in\mathcal{J}}p_{j}q_{ij}$ (instead of defining a utility function towards each provider as in (\ref{eqn:utility})), it can be shown that both Type 1 and Type 2 can be incorporated in the same framework and there is always a unique Nash equilibrium. In particular, all users except the undecided user will choose to purchase resource from a single provider even though they have the freedom to purchase from both providers. 

We can now define the total network utility $U_{T}$ as the summation of the users' utilities and the providers' profits, i.e. $U_{T}=\sum_{i \in \mathcal{I}}^{}u_{i}+\sum_{j=1}^{2}\Pi_{j}=\sum_{i \in \mathcal{I}} a_{i}\log \left( 1+\sum_{j\in\mathcal{J}}\frac{q_{ij}}{g_{i}(h_{ij})}\right)$, and state the main result for the duopoly case:

\begin{theorem}
\label{th:PoA}
Assume that users can purchase the resource from both providers. A WCG has a unique outcome (either integer or fractional MCE) with the property that providers have no incentive to change the price and users obtain their maximum utilities. Furthermore, the total network utility $U_{T}$ is maximized at the equilibrium.
\end{theorem}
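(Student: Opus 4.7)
The plan is to establish three things in sequence: existence of an outcome (integer or fractional MCE), uniqueness of that outcome across all possible Nash profiles, and finally the social optimality of the equilibrium allocation.

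For existence, I would exploit the structure of the optimal price ratio map $\mu(\nu)$ defined in (\ref{eq:ratio}). The map is piecewise constant on the intervals $[\alpha_{k},\alpha_{k+1})$, because the induced partition $(\mathcal{I}_{1}(\nu),\mathcal{I}_{2}(\nu))$ is constant on each such interval, and it is non-increasing in $\nu$ by Lemma \ref{lem:monotone}, since raising $\nu$ only adds users to $\mathcal{I}_{1}$ and removes them from $\mathcal{I}_{2}$. The line $f(\nu)=\nu$ is continuous and strictly increasing. Hence exactly one of two mutually exclusive scenarios occurs: either $f$ intersects $\mu$ in the interior of some constant piece, giving a fixed point $\nu^{*}\in(\alpha_{k},\alpha_{k+1})$ and thus an integer MCE by Proposition \ref{pr:integerMCE}; or there is a unique index $l$ at which $\mu(\alpha_{l-1})>\alpha_{l}>\mu(\alpha_{l})$, producing a fractional MCE by Proposition \ref{pro:eps-equilib}. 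Enumerating the $I+1$ intervals settles both existence and the dichotomy.

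For uniqueness I would argue that any NE must coincide with the MCE produced above. Proposition \ref{prop1}, applied to each provider's sub-problem against its induced clientele, forces $p_{j}=p_{j}^{*}(\mathcal{I}_{j})$ at any NE, since otherwise a unilateral move toward the monopoly optimum against the current clientele is strictly profitable (any deviation that causes a user to switch provider either does not help the deviating provider or leads back to the market-clearing condition after the switch). Lemma \ref{lem:users-pref} then forces the induced partition to be the cut defined by $p_{2}/p_{1}$ in the ordering of the $\alpha_{i}$'s. The simultaneous requirements reduce exactly to the fixed-point equation for $\mu$, so the monotone structure from Step~1 pins down the prices (and in the fractional case the splitting ratio $\epsilon^{*}$) uniquely.

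For social optimality I would apply KKT to the concave program
\begin{align*}
\max_{q_{ij}\ge 0}\;\sum_{i\in\mathcal{I}} a_{i}\log\Bigl(1+\sum_{j\in\mathcal{J}}\tfrac{q_{ij}}{g_{i}(h_{ij})}\Bigr)\quad\text{s.t.}\quad \sum_{i\in\mathcal{I}} q_{ij}\le Q_{j},\;\forall j,
\end{align*}
and identify the multipliers $\lambda_{j}$ of the supply constraints with the equilibrium prices $p_{j}^{NE}$. The stationarity condition is $\frac{a_{i}/g_{i}(h_{ij})}{1+\sum_{k} q_{ik}/g_{i}(h_{ik})}\le \lambda_{j}$, with equality whenever $q_{ij}>0$. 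At the MCE allocation, for each user's chosen provider $j^{*}$ this reduces to $a_{i}/(g_{i}(h_{ij^{*}})+q_{ij^{*}}^{*})=p_{j^{*}}^{NE}$, which is exactly the demand first-order condition that defines $q_{ij^{*}}^{*}(p_{j^{*}})$; for any other provider $j\ne j^{*}$ the inequality rearranges to $p_{j}\,g_{i}(h_{ij})\ge p_{j^{*}}\,g_{i}(h_{ij^{*}})$, which is precisely the preference criterion of Lemma \ref{lem:users-pref}. Primal feasibility and complementary slackness hold because the MCE clears both markets, so KKT is satisfied and the equilibrium maximizes $U_{T}$.

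The main obstacle will be the undecided user $l$ in the fractional case: I must show that the KKT conditions remain satisfied when $q_{l1},q_{l2}>0$ simultaneously. The equality $p_{1}^{*}g_{l}(h_{l1})=p_{2}^{*}g_{l}(h_{l2})$ that defines the fractional MCE makes the stationarity condition hold with equality for both $j$, so the splitting is KKT-consistent; establishing that there is a unique $\epsilon^{*}$ reconciling both market-clearing equations simultaneously (rather than merely a continuum of valid splits) is delegated to Proposition \ref{pro:eps-equilib} and relies on the strict monotonicity of each provider's clearing condition in $\epsilon$.
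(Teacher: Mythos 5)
Your existence and social-optimality steps essentially track the paper: existence via the monotone, piecewise-constant structure of $\mu(\nu)$ in (\ref{eq:ratio}) together with Propositions \ref{pr:integerMCE} and \ref{pro:eps-equilib}, and optimality by checking that the MCE allocation and prices satisfy the KKT system of the concave welfare program (your stationarity computation correctly reproduces the demand first-order condition and the preference criterion of Lemma \ref{lem:users-pref}, and sufficiency of KKT for a concave program does the rest). The substantive divergence, and the place where your argument has a genuine gap, is uniqueness. You assert that at any equilibrium outcome Proposition \ref{prop1} ``forces'' $p_{j}=p^{*}_{j}(\mathcal{I}_{j})$, i.e.\ that every equilibrium is market clearing with each provider at its monopoly-optimal price against its realized clientele. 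Proposition \ref{prop1} is a statement about a \emph{fixed} user set; under competition a unilateral price change moves the cut $p_{2}/p_{1}$ and hence changes the clientele itself, so the provider's profit is not the monopoly profit curve of Figure \ref{fig:profit} and the deviation analysis is not a direct application of that proposition. Your parenthetical (``any deviation either does not help or leads back to the market-clearing condition'') is exactly the claim that would need proof, and you do not supply it. A second, related omission: since the theorem explicitly allows purchases from both providers, uniqueness also requires ruling out outcomes in which \emph{several} users split their demand; this needs the genericity argument that $p_{1}g_{i}(h_{i1})=p_{2}g_{i}(h_{i2})$ can hold for at most one user, which you implicitly assume by imposing the cut structure of Lemma \ref{lem:users-pref} (itself proved in the single-provider-choice model).

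The paper takes a different route that avoids both issues: it proves (Theorem \ref{thm:system}) that the welfare problem SYSTEM has a \emph{unique} maximizer and a \emph{unique} pair of Lagrange multipliers, that at most one user can have $q_{i1}q_{i2}>0$ at any such point, and that when the multipliers are posted as prices the user-optimal, supply-feasible allocation is pinned down (with the caveat that the undecided user's split must be coordinated, matching $\epsilon^{*}$ of Proposition \ref{pro:eps-equilib}). Uniqueness of the equilibrium outcome then falls out of the uniqueness of the KKT tuple, and the ``no incentive to deviate'' property is imported from Propositions \ref{pr:integerMCE} and \ref{pro:eps-equilib}; no best-response analysis against a moving clientele is needed. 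If you want to keep your fixed-point-based uniqueness, you must either prove your market-clearing claim for arbitrary deviations, or, more in the spirit of the paper, interpret ``outcome'' as a price/allocation pair satisfying user optimality and market clearing and derive its uniqueness from the KKT analysis you already set up in your optimality step.
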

{\begin{proof}
The proof of Theorem \ref{th:PoA} is given in Appendix \ref{app:socialOptimality}.
\end{proof}
}

Theorem \ref{th:PoA} can be proved by comparing the results of Propositions \ref{pr:integerMCE} and \ref{pro:eps-equilib} with the optimality (KKT) condition of the total network utility maximization problem. This shows that the competition among service providers does not affect the social welfare\footnote{In other words, the Price of Anarchy (\cite{RoTa02}) is zero.}. We can also show that this result holds for any strictly increasing, strictly concave utility function of the users.

For the oligopoly case ($J\geq3$) we can find the integer MCE (when it exists) using an efficient centralized algorithm (polynomial in the number of users), demonstrate its uniqueness, and represent users in $J-1$ dimensional space such that the regions of preference are convex. Our ongoing work focuses on finding fractional MCE and the development of decentralized algorithms for finding the MCE.

\section{Numerical Results}
\label{sec:numerics}

\begin{figure}[ht] %  figure placement: here, top, bottom, or page
   \centering
   \parbox{3.2in}{
   \includegraphics[width=2.6in]{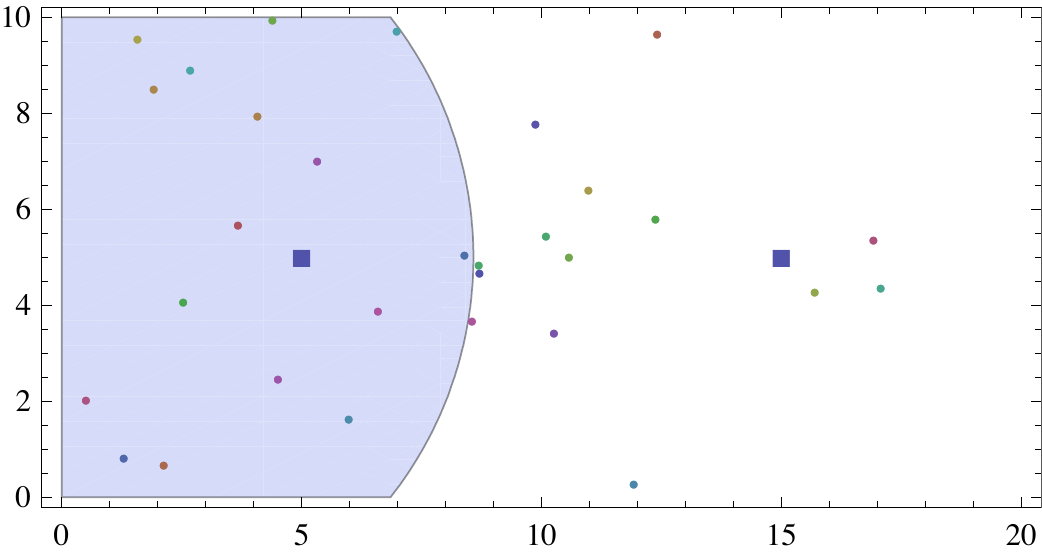}
   \caption{Providers have different supplies: $Q_{1}=\frac{1}{10}Q_{2}$.}
   \label{fig:quantity21}}
   \qquad \qquad
   \parbox{3in}{
    \includegraphics[width=2.6in]{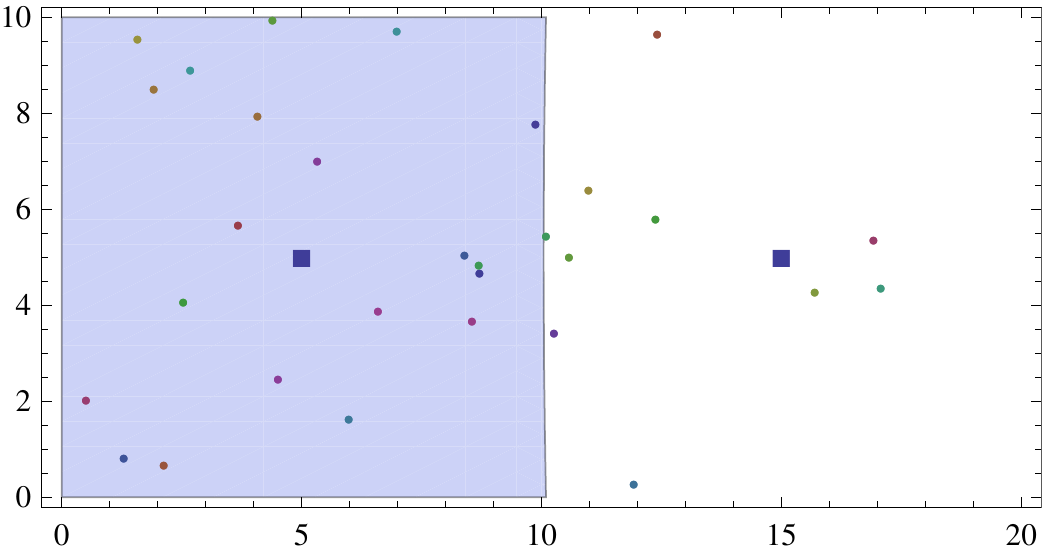}
   \caption{Providers have equal supplies: $Q_{1}=Q_{2}$.}
   \label{fig:equalQuant}}
\end{figure}
In this section we present some numerical results showing the effects of different parameters in the WCG. We consider two providers selling downlink transmit power ($Q_{1},Q_{2}$) to 30 users contained in a $10 \times 20$ area. The utility function for user $i$ is $u_{ij}=a_{i} \log (1+\frac{q_{ij}}{d^{\beta}_{ij}})-p_{j}q_{ij}$, where $d_{ij}$ is the distance to base station $j$ and $\beta$ is the path loss exponent (this corresponds to $g_{i}(h_{ij})=\frac{1}{h_{ij}}=\frac{1}{d^{\beta}_{ij}}$). The baseline value we take is  $\beta=3$. Users are placed uniformly in the area and their willingness to pay coefficients $a_{i}$s are chosen from a uniform distribution.

Figures \ref{fig:quantity21} and \ref{fig:equalQuant} show the effect of supply on the equilibrium association. In both figures an integer MCE is obtained. The network parameters are the same in both cases, except $Q_{1}$ in Fig.~\ref{fig:equalQuant} is 10 times larger than in Fig.~\ref{fig:quantity21}. Base stations of different providers are represented by squares and users by dots. The whole plane is divided into two disjoint areas: the shaded area indicates the region with users who are associated to provider 1 (on the left) and the remaining unshaded area indicates the users with provider 2 (which is nonconvex). It is clear that by increasing supply $Q_{1}$, provider 1 attracts more users in Fig.~\ref{fig:equalQuant} compared with Fig.~\ref{fig:quantity21}.

\begin{figure}[ht] %  figure placement: here, top, bottom, or page
   \centering
   \parbox{3.2in}{
   \includegraphics[width=2.8in]{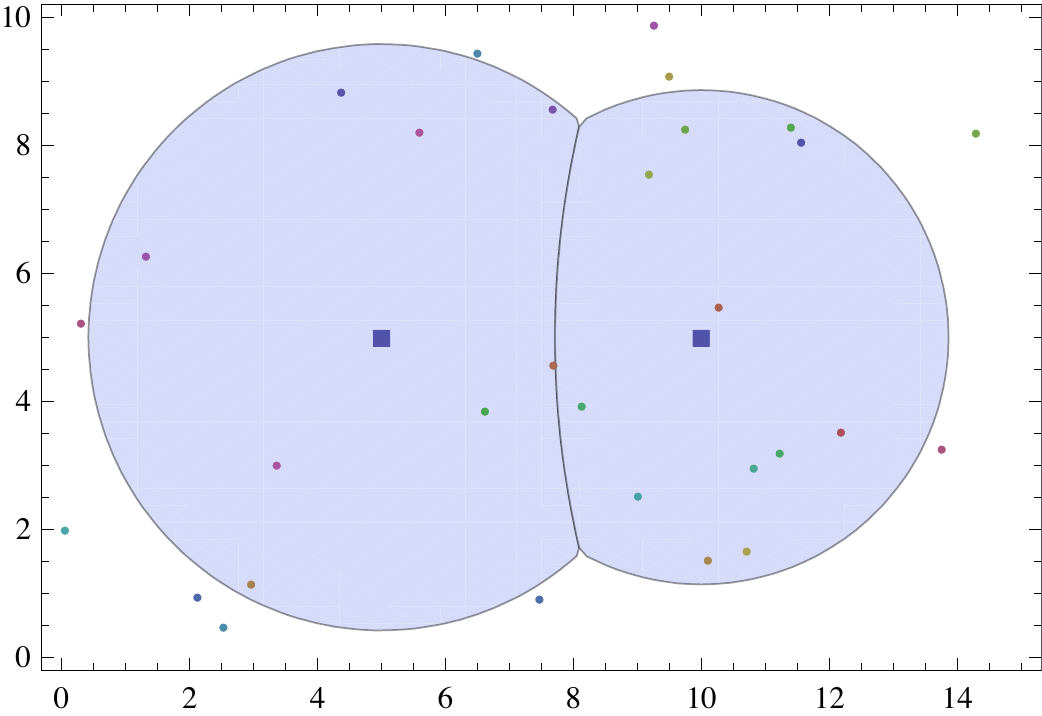}
   \caption{Regions with positive demands.}
   \label{fig:nonzerodemand}}
   \qquad \qquad
   \parbox{3in}{
    \includegraphics[width=2.6in]{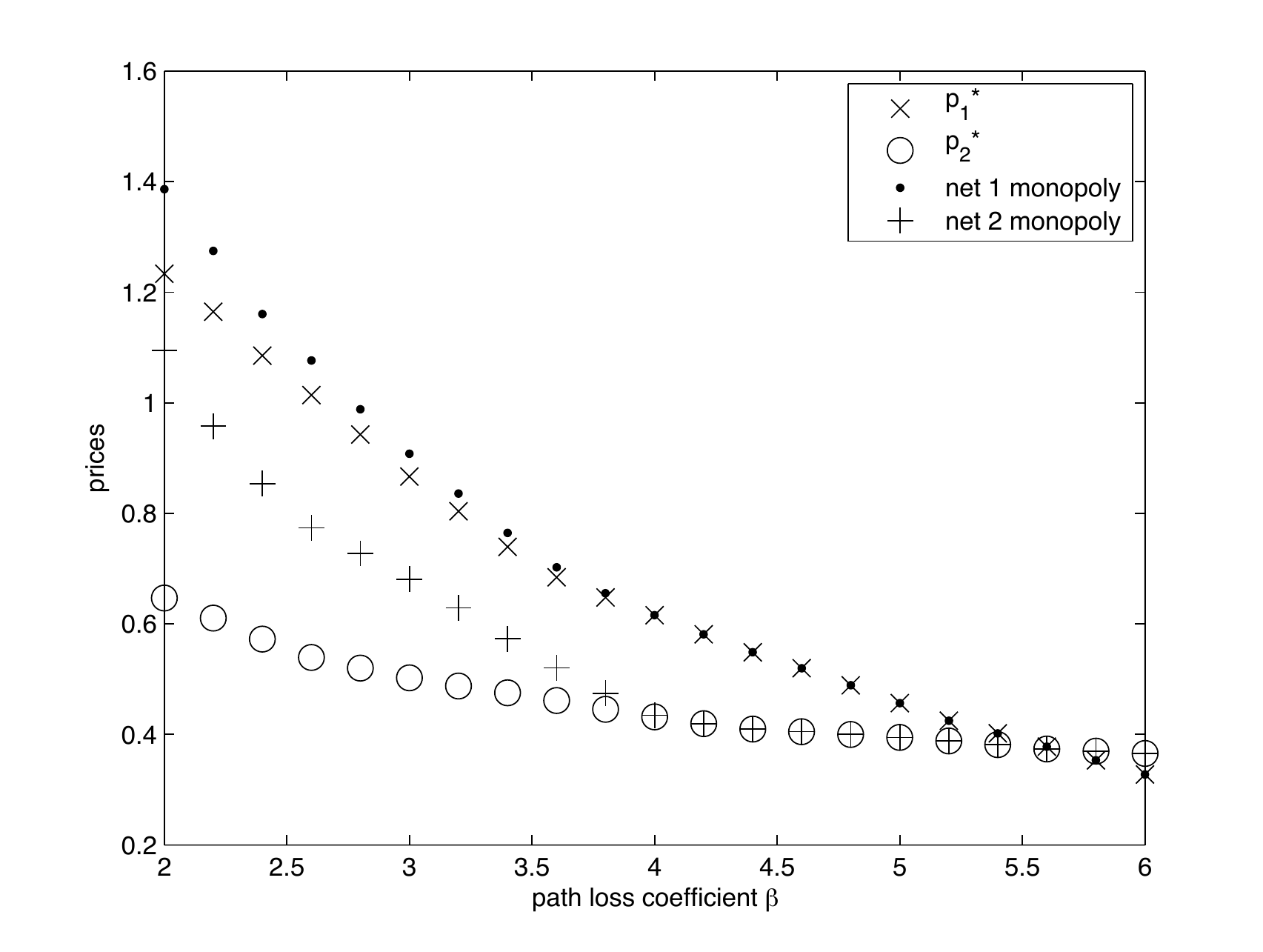}
   \caption{Prices as a function of the path loss coefficent}
   \label{fig:pathloss}}
\end{figure}

In Fig.~\ref{fig:nonzerodemand} the whole plane is divided into three regions: the unshaded region where the users have zero demand from both providers, and the two shaded regions where users have positive demand from one of the providers. Here both providers have the same amount of supply ($Q_{1}=Q_{2}$), while the users have equal willingness to pay. Since there are less users around the base station of provider 1 (on the left), the equilibrium price $p_{1}^{\ast}$ is smaller than $p_{2}^{\ast}$ and the preference region (shaded area) for provider 1 is larger compared with that of provider 2 (since lower price can attract users who are further away).

In Fig.~\ref{fig:pathloss} we plot the equilibrium prices ($p_{1}^{\ast}$ and $p_{2}^{\ast}$) as the path loss exponent varies from $\beta$ from $2$ to $6$. We also plot the monopolistic optimal prices for providers 1 and 2. As one would expect, the duopoly prices are lower than the monopoly ones. For large values of $\beta$ the duopoly situation looks very much like two monopolies, since signal attenuation is so strong that users have positive demand from only the closest provider and thus there is no competition among providers.

\section{Conclusion and Future Directions}
\label{sec:conclusion}

In this paper, we study the competition of wireless service providers in a heterogeneous user environment. We characterize the equilibrium state of the competition game through proper metric transformation and dimension reduction of the prameters, and obtain a surprising result that competition leads to a globally optimal outcome under some commonly used users' utility functions and a generic channel model. Our future work will focus on extending the results we have obtained for the duopoly case to the competition of an arbitrary number of providers, developing decentralized algorithms that result in equilibrium state, and analyzing the competitions under more general utility functions.

\begin{appendix}
\subsection{About the users' utility function}
\label{app:util-function} 
We present in more detail the merits of using the specific form of the utility function given in (\ref{eqn:utility}). We begin by presenting a specific example. Consider two base stations that are in the business of selling downlink power to wireless users. Base station $j\in\{1,2\}$ has $P_{j}$ power to sell to the paying users. The cost per unit of power $p_{j}$ is set by the base station $j$. Users $i$ experiences a channel $h_{ij}$ to base station $j$, so that the maximum rate that a user can obtain is $R^{*}_{ij}(P_{ij})=\log (1+\frac{P_{ij}|h_{ij}|^{2}}{\sigma^{2}_{i}})$, where $P_{ij}$ is the power obtained from network $j$, and $\sum_{i=1}^{I}P_{ij}\leq P_{j}$. At the same time, the user has to pay $p_{j}P_{ij}$ to the network. Hence, a user faces a tradeoff in purchasing the downlink power from the network that offers a better channel and getting it from a network that charges less. 
We assume that the users are economic agents whose goal is to maximize their utility (quality of service/happiness). One way to model the tradeoff that the users are experiencing is by assuming that the utility function of a user $i$ is:
\begin{align*}
u_{ij}=a_{i}R^{*}_{ij}-p_{j}P_{ij}=a_{i}\log (1+\frac{P_{ij}|h_{ij}|^{2}}{\sigma^{2}_{i}})-p_{j}P_{ij},
\end{align*}
where $a_{i}>0$ is a user-specific parameter indicating how happy a user is with the wireless resources. For example, for users $k$ and $l$ with $a_{k}>a_{l}$ (but equally good channels), user $k$ will demand more power. 

The user problem is then to choose the network $j^{*}_{i}$ that offers greater utility:
\begin{align*}
{j}^{*}_{i}=\argmax_{j\in\{1,2\}} u_{ij},
\end{align*}
which is a function of the prices proposed by the two networks, and the quantity of the power purchased. The other parameters in the problem are fixed: $a_{i}$'s by the users and $h_{ij}$'s by nature.

In particular, we assume that user $i$ will choose the provider $j$ that offers him a higher value of $u_{ij}$ (c.f. \cite{Sengupta:2007}), where:
\begin{align*}
u_{ij}=a_{i}\log \left( 1+\frac{q_{ij}}{g_{i}(h_{ij})}\right)-p_{j}q_{ij}.
\end{align*}
In our earlier example the resource $Q$ is power, and $g_{i}(h_{ij})=\frac{\sigma_{i}^{2}}{|h_{ij}|^{2}}$. The \emph{channel quality offset} function $g_{i}(h_{ij})$ that we introduce allows us to differentiate the effect that buying a resource from a particular provider has on the quality of service actually obtained by the user. The only assumption on the channel quality offset $g_{i}(h)$ is that it is a decreasing function of $h$. In our example, the factor $g_{i}(\cdot)$  captures the fact that buying a unit of power in general does not provide the user with the same amount of rate for networks that offer him different channel conditions.

Aside from the channel quality offset parameter, our model is similar to other models existing in the literature. The satisfaction of communication users is often modeled as an increasing, concave function of the desirable resource, from which the cost of obtaining the resource is subtracted. The concavity of the utility function reflects the fact that the initial resource is more important to a user (the more a user has, the less he needs). The utility function we use is one of such kind. Keeping the utility as a general concave function would have been more appropriate, but we chose logarithm for its convenience and since its use permits us to write many quantities of interest in compact form.

\subsection{Optimum price computation}
\label{app:opt-price}
In this section we show how the optimal price $p^{*}(I)$ can be computed for any given set of users $I$. Recall that 
\begin{align}
p^{*}(\mathcal{I})=\frac{\sum_{\mathcal{I}^{+}(p^{*})}a_{i}}{\sum_{\mathcal{I}^{+}(p^{*})}g_{i}(h_{i})+Q}.
\label{eqn:opt-price}
\end{align}
As shown in Section \ref{sec:monopoly}, the optimal price is the solution of the equation $pQ=\sum_{i \in \mathcal{I}}\left( a_{i}-p g_{i}(h_{i})\right)^{+} $. Consider the price function 
\begin{align}
p^{f}(\mathcal{I})=\frac{\sum_{\mathcal{I}}a_{i}}{\sum_{\mathcal{I}}g_{i}(h_{i})+Q},
\label{eqn:fictitious-price}
\end{align}
which is the solution to the equation $pQ=\sum_{i \in \mathcal{I}}\left( a_{i}-p g_{i}(h_{i})\right) $, i.e. it is the optimal price if users are allowed to purchase negative amount of resource (a user purchasing negative resource would increase the supply of the seller). We will call $p^{f}(\mathcal{I})$ the \emph{fictitious} price function.  Unlike the optimal price function, the fictitious price function is easily calculated from (\ref{eqn:fictitious-price}). Furthermore, we can use the fictitious price function to calculate the optimal price function. Before defining the algorithm rigorously, we give a high level explanation. The idea is to find a subset of users for which the fictitious price is equal to the optimal price.

The fictitious price equals the optimal price if and only if $\frac{a_{i}}{p^{f}(\mathcal{I})}-g_{i}(h_{i})\geq 0$ for all $i\in \mathcal{I}$. In general, this is not the case since some users will have negative demand. Then, we proceed by excluding those ``bad'' users from the set $\mathcal{I}$ and by recalculating the fictitious price for the users with positive valuation only (the ``good'' users). Removing bad users increases the fictitious price so the set of good users can only get smaller. We terminate the procedure when we find the price for which there are no more bad users. In each step we remove at least one user, and there has to be at least one good user remaining. Hence this algorithm terminates in at most $I-1$ steps with the set of good users who have positive demand at the optimal price. The formal proof follows.

 We begin by calculating $p^{f}(\mathcal{I})$. Then, we check whether all of the users in $\mathcal{I}$ have positive demand (i.e. if $\frac{a_{i}}{p^{f}(\mathcal{I})}-g_{i}(h_{i})\geq 0$ for all $i\in \mathcal{I}$). If yes, then $\frac{a_{i}}{p^{f}(\mathcal{I})}-g_{i}(h_{i})=(\frac{a_{i}}{p^{f}(\mathcal{I})}-g_{i}(h_{i}))^{+}$ for all $i \in \mathcal{I}$ and the fictitious price is equal to the optimal price. If not, there exists at least one user $k$  who has negative demand ($\frac{a_{k}}{p^{f}(\mathcal{I})}-g_{k}(h_{k})<0$). We compare $p^{f}(\mathcal{I})$ and $p^{f}(\mathcal{I} \setminus \{k\} )$ (to ease notation, we denote by $\mathcal{I}'=\mathcal{I} \setminus \{k\}$ the set of users without user $k$):
 \begin{align}
p^{f}(\mathcal{I}')-p^{f}(\mathcal{I})= & \frac{\sum_{i\in \mathcal{I}'} a_{i}}{\sum_{i\in \mathcal{I}'}g_{i}(h_{i})+Q} - \frac{\sum_{i\in \mathcal{I}} a_{i}}{\sum_{i\in \mathcal{I}}g_{i}(h_{i})+Q} \nonumber \\
= & \frac{p^{f}(\mathcal{I}')\left(g_{k}(h_{k})-\frac{a_{k}}{p^{f}(\mathcal{I}')}\right)}{\sum_{i\in I}g_{i}(h_{i})+Q} > 0.
\label{eqn:removeUser}
\end{align}
Hence, removing a user with negative demand increases the fictitious price. Notice that, since $p^{f}(\mathcal{I}')>p^{f}(\mathcal{I})$, then $\frac{a_{k}}{p^{f}(\mathcal{I}')}-g_{k}(h_{k})<\frac{a_{k}}{p^{f}(\mathcal{I})}-g_{k}(h_{k})<0$, i.e. a user with negative demand at price $p^{f}(\mathcal{I})$ will still have negative demand at price $p^{f}(\mathcal{I}')$. The same statement holds if we remove several bad users at the same time (since we can always think of this as removing bad users one by one). At the same time, if the price is changed to $p^{f}(\mathcal{I}')$, all good users will have their demand decreased, and some may see their demand go negative. Hence, additional users may need to be removed and price recaculated. These steps are repeated until the price is such that there are no users with negative demand.

This procedure is guaranteed to terminate with at least one user. To see this, note that a single user always has positive demand when he is the network's only customer. 

Now we can define the optimal price computation algorithm. We initiate $\mathcal{I}^{*}=\mathcal{I}$ at the beginning of the algorithm.

\begin{tt}
Algorithm 1:\\
Begin
\begin{itemize}
\item[1.]  Calculate  $p^{f}(\mathcal{I}^{*})$.
\item[2.] Find $\hat{\mathcal{I}}=\{i \in \mathcal{I}^{*}: \frac{a_{i}}{p^{f}(I)}-g_{i}(h_{i})<0\}$
\item[3.] If $|\hat{\mathcal{I}}|>0$ set $\mathcal{I}^{*}:=\mathcal{I}^{*} \setminus \hat{\mathcal{I}}$ and go to step 1.
\item[4.] $p^{*}(\mathcal{I})=p^{f}(\mathcal{I}^{*})$.
\end{itemize}
End\\
\end{tt}
Notice that, at the end of the algorithm, 
\begin{align*}
\sum_{i \in \mathcal{I}}\left(\frac{a_{i}}{p^{*}(\mathcal{I})}-g_{i}(h_{i})\right)^{+}= &\sum_{i \in \mathcal{I}^{*}}\left(\frac{a_{i}}{p^{*}(\mathcal{I})}-g_{i}(h_{i})\right)^{+}+\sum_{i \in \mathcal{I} \setminus \mathcal{I}^{*}}\left(\frac{a_{i}}{p^{*}(\mathcal{I})}-g_{i}(h_{i})\right)^{+}\\
= &\sum_{i \in \mathcal{I}^{*}}\left(\frac{a_{i}}{p^{f}(\mathcal{I}^{*})}-g_{i}(h_{i})\right),
\end{align*}
since $\left(\frac{a_{i}}{p^{f}(\mathcal{I}^{*})}-g_{i}(h_{i})\right)^{+}=0$ for all $i\in \mathcal{I} \setminus \mathcal{I}^{*}$. Hence, the fictitious price for $\mathcal{I}^{*}$ is the optimal price for $\mathcal{I}$.

\subsection{Monotonicity of the optimal price (proof of Lemma \ref{lem:monotone})}
\label{app:monotonicity}
In this appendix, we prove that adding a new user to the user set can only increase the optimal price. Suppose that we already computed $p^{*}(\mathcal{I})$ and we wish to compute $p^{*}(\mathcal{I}')$, where $\mathcal{I}'=\mathcal{I} \cup \{k\}$. Without loss of generality, assume that $\frac{a_{i}}{p^{*}(\mathcal{I})}-g_{i}(h_{i})\geq0$ for all $i\in \mathcal{I}$ (otherwise, we can always restrict ourselves to such a set of users). Notice that adding user $\{k\}$ does not change anything if $\frac{a_{k}}{p^{*}(\mathcal{I})}-g_{k}(h_{k})\leq0$ (then user $k$ has zero demand, and the optimal price is unchanged. This is different than the fictitious price from Appendix \ref{app:opt-price} which decreases if a bad user is added). So, we assume $\frac{a_{k}}{p^{*}(\mathcal{I})}-g_{k}(h_{k})>0$ Then, following (\ref{eqn:removeUser})

\begin{align}
p^{f}(\mathcal{I}')-p^{*}(\mathcal{I})=& \frac{\sum_{i\in \mathcal{I}'} a_{i}}{\sum_{i\in \mathcal{I}'}g_{i}(h_{i})+Q} - \frac{\sum_{i\in \mathcal{I}} a_{i}}{\sum_{i\in \mathcal{I}}g_{i}(h_{i})+Q} \nonumber \\
= & \frac{p^{*}(\mathcal{I})\left(\frac{a_{k}}{p^{*}(\mathcal{I})}-g_{k}(h_{k})\right)}{\sum_{i\in \mathcal{I}'}g_{i}(h_{i})+Q} > 0.
\label{eqn:addUser}
\end{align}
In principle, in order to find $p^{*}(\mathcal{I}')$ we need to run Algorithm 1. Here, it suffices to notice that in each step of Algorithm 1 we remove only bad users, which can only increase the fictitious price. Hence, $p^{*}(\mathcal{I})\geq p^{f}(\mathcal{I}')$ and therefore $p^{*}(\mathcal{I}')>p^{*}(\mathcal{I})$ if $\frac{a_{k}}{p^{*}(\mathcal{I})}-g_{k}(h_{k})>0$. In general, for $\frac{a_{k}}{p^{*}(\mathcal{I})}-g_{k}(h_{k})\in \mathbb{R}$ we have $p^{*}(\mathcal{I}')\geq p^{*}(\mathcal{I})$. 

Hence, the optimal price function is non-decreasing if we add a single user. Since any group of users can be added by adding users one by one, the result of Lemma \ref{lem:monotone} holds.

\subsection{Proof of Lemma \ref{lem:users-pref}}
\label{app:easyUserCondition}

Assume that $q^{*}_{i1},q^{*}_{i2}>0$, i.e. $a_{i}>p_{1}g_{i}(h_{i1})$, $a_{i}>p_{1}g_{i}(h_{i1})$. We investigate the difference between $u_{i1}(p_{1},q^{*}_{i1})$ and $u_{i2}(p_{2},q^{*}_{i2})$ for any fixed $(p_{1},p_{2})$. By substituting $q^{*}_{ij}$ into $u_{ij}(p_{j},q^{*}_{ij})$ we get:
\begin{align*}
u_{i1}-u_{i2} = & a_{i}\log \left( \frac{a_{i}}{p_{1}g_{i}(h_{i1})}\right)-a_{i}+p_{1}g_{i}(h_{i1}) - a_{i}\log \left( \frac{a_{i}}{p_{2}g_{i}(h_{i2})}\right)+a_{i}-p_{2}g_{i}(h_{i2})\\
=& - a_{i}\log \left( \frac{p_{1}g_{i}(h_{i1})}{p_{2}g_{i}(h_{i2})}\right)+p_{2}g_{i}(h_{i2})\left( \frac{p_{1}g_{i}(h_{i1})}{p_{2}g_{i}(h_{i2})}-1\right).
\end{align*}
When $p_{1}g_{i}(h_{i1})=p_{2}g_{i}(h_{i2})$, this expression is equal to 0, in which case the user is indifferent between the two networks. Assume $p_{1}g_{i}(h_{i1})<p_{2}g_{i}(h_{i2})$ and perform the change of variables $1-x= \frac{p_{1}g_{i}(h_{i1})}{p_{2}g_{i}(h_{i2})}$, $x\in(0,1)$. Then,
\begin{align*}
u_{i1}-u_{i2} = & -a_{i}\log \left( 1-x \right) +p_{2}g_{i}(h_{i2}) \left( 1-x -1\right)\\
	\stackrel{(a)}{>} & a_{i}\,x - p_{2}g_{i}(h_{i2})\,x = p_{2} \left(\frac{a_{i}}{p_{2}}-g_{i}(h_{i2})\right)x\\
	= &  p_{2} q^{*}_{i2}\,x \\
	> &0,
\end{align*}
where $(a)$ follows from $\log (1-x)<-x,\; x\in (0,1)$ . Hence, $u_{i1}>u_{i2}$ when $p_{1}g_{i}(h_{i1})<p_{2}g_{i}(h_{i2})$. Similarly, from symmetry it follows that $u_{i1}<u_{i2}$ when $p_{1}g_{i}(h_{i1})>p_{2}g_{i}(h_{i2})$. This concludes the proof of Lemma \ref{lem:users-pref} for a user that has positive demand towards both networks.

It remains to prove the cases when the user has zero demand towards one or both of the networks. When $q^{*}_{i1}=q^{*}_{i2}=0$, the user does not want to communicate, so by convention he can prefer network 1. Assume that, $p_{1}g_{i}(h_{i1})<p_{2}g_{i}(h_{i2})$, but that $q^{*}_{i1}=0$ and $q^{*}_{i2}>0$. Then user $i$ would prefer network 2 (since it gives him positive utility, and network 1 gives him zero utility). But this case can never arise if $p_{1}g_{i}(h_{i1})<p_{2}g_{i}(h_{i2})$. To see this, notice that $q^{*}_{i1}=0$ implies $p_{1}q^{*}_{i1}=0$ which in turn implies $a_{i}-p_{1}g_{i}(h_{i1})\leq 0$, whereas $q^{*}_{i2}>0$ implies  $a_{i}-p_{2}g_{i}(h_{i2})> 0$. Hence we have
\begin{align*}
a_{i}-p_{1}g_{i}(h_{i1})\leq 0 &< a_{i}-p_{2}g_{i}(h_{i2}) \text{ or} \\
p_{1}g_{i}(h_{i1}) > & p_{2}g_{i}(h_{i2}),
\end{align*}
which is a contradiction. Finally, if $p_{1}g_{i}(h_{i1})<p_{2}g_{i}(h_{i2})$ and $q_{i1}^{*}>0$ but $q_{i2}^{*}=0$, user $i$ prefers network 1 since it gives him positive utility, while network 2 gives him zero utility. Similar reasoning goes when $p_{1}g_{i}(h_{i1})>p_{2}g_{i}(h_{i2})$.

\subsection{Proof of Theorem \ref{pro:eps-equilib}}
\label{app:eps-equilib}

To prove the theorem, we need to find a way to split the resources required by user $k$ in such a way that the user $k$ is indifferent between the two networks. 

By the condition of the theorem $\mu(k-1)>\alpha_{k}>\mu(k)$, i.e. $\frac{p^{*}_{2}(\{k, \ldots,N\})}{p^{*}_{1}(\{1,\ldots,k-1\})} > \alpha_{k}> \frac{p^{*}_{2}(\{k+1, \ldots,N\})}{p^{*}_{1}(\{1,\ldots,k\})}$. Let $\mathcal{I}_{1}=\{1,\ldots, k-1\}$ and $\mathcal{I}_{2}=\{k+1,\ldots,M\}$ and assume users from $\mathcal{I}_{1}$ are associated with network 1, and users from $\mathcal{I}_{2}$ are associated with network 2. The idea is to extend the definition of optimal prices to include partial demand of the undecided user $k$ so that the function $\mu(\cdot)$ can be continuous. We will provide a definition of $p^{\epsilon}_{1}$ such that $p^{0}_{1}=p^{*}_{1}(\mathcal{I}_{1})$, $p^{1}_{1}=p_{1}^{*}(\mathcal{I}_{1}\cup \{k\})$, and changing the value of $\epsilon$ from $0$ to $1$ sweeps the interval $(p^{*}_{1}(\mathcal{I}_{1}),p^{*}_{1}(\mathcal{I}_{1}\cup \{k\}))$ (and similarly for $p^{\epsilon}_{2}$). In other words, $p^{\epsilon}_{1}(\mathcal{I}_{1})$ is a strictly increasing function of $\epsilon$, and for each value $p \in (p^{*}_{1}(\mathcal{I}_{1}),p^{*}_{1}(\mathcal{I}_{1}\cup \{k\}))$ there exists a unique $\epsilon_{p}$ such that $p^{\epsilon_{p}}_{1}(\mathcal{I}_{1})=p$. 

Notice that users whose demand is already zero at $p^{*}_{1}(\mathcal{I}_{1})$ will have zero demand at $p^{*}_{1}(\mathcal{I}_{1}\cup \{k\})$ so we can safely ignore them. Assume, without loss of generality, that all users from $\mathcal{I}_{1}$ (respectively, $\mathcal{I}_{2}$) have strictly positive demand when facing price $p^{*}_{1}(\mathcal{I}_{1})$ (respectively, $p^{*}_{2}(\mathcal{I}_{2})$). Then, for $j = 1,2$: 
\begin{align*}
p^{*}_{j}(\mathcal{I}_{j})=\frac{\sum_{i \in \mathcal{I}_{j}}a_{i}}{\sum_{i \in \mathcal{I}_{j}}g_{i}(h_{ij})+Q_{j}}.
\end{align*}
When increasing the price from $p^{*}_{1}(\mathcal{I}_{1})$ to $p^{*}_{1}(\mathcal{I}_{1}\cup \{k\})$ some users may see their demand go down to zero. The nuisance in accounting for this is that there is no closed-form expression for the optimal prices in general (the only expression at our disposal is the one for the fictitious price. Recall that the fictitious price and the optimal price are the same if and only if no user has negative demand). 

To begin with, we assume that $\frac{a_{i}}{p^{*}_{j}(\mathcal{I}_{j}\cup \{k\})}-g_{i}(h_{ij})\geq 0$ for all $i \in \mathcal{I}_{j}$, $j =1,2$ ¥ (i.e. the increase in price caused by the addition of user $k$ does not make any of the users give up communicating). This is not true in general, but considering this case first will make it easier to understand the general case where some of the users may see their demand drop to zero. Naturally, $q_{kj}^{*}>0$, $j=1,2$ otherwise user $k$ cannot change the price and there is no undecided user issue. Then, 
\begin{align*}
p^{*}_{j}(\mathcal{I}_{j} \cup \{k\})=\frac{\sum_{i \in \mathcal{I}_{j}}a_{i}+a_{k}}{\sum_{i \in \mathcal{I}_{j}}g_{i}(h_{ij})+g_{k}(h_{kj})+Q_{j}},
\end{align*}
and it makes sense to define the \emph{fractional} equilibrium price $p^{\epsilon}_{j}(\mathcal{I}_{j})$ (with some abuse of notation) as
\begin{align*}
p^{\epsilon}_{j}(\mathcal{I}_{j}) \stackrel{\triangle}{=} p^{*}_{j}(\mathcal{I}_{j} \cup \{\epsilon k\})=\frac{\sum_{i \in \mathcal{I}_{j}}a_{i}+\epsilon a_{k}}{\sum_{i \in \mathcal{I}_{j}}g_{i}(h_{ij})+\epsilon g_{k}(h_{kj})+Q_{j}},
\end{align*}
for all $\epsilon \in [0,1]$. In other words $p^{\epsilon}_{j}(\mathcal{I}_{j})$ is obtained as the optimal price if user $k$ demands quantity $\epsilon q^{*}_{kj}$ from network $j$ instead of $q^{*}_{kj}$. We can see that $p^{\epsilon}_{j}$ is continuous and strictly increasing for all $\epsilon \in [0,1]$. For each $p_{j} \in (p^{*}_{1}(\mathcal{I}_{1}), p^{*}_{1}(\mathcal{I}_{1}\cup \{k\})$ there exists an $\epsilon$ such that $p_{j}=p^{\epsilon}_{j}$. Then, we can define $\mu(s)$ for $s \in [k-1,k]$  as
\begin{align*}
\mu(s)\stackrel{\triangle}{=} \mu(k-1+\epsilon k)=\frac{p^{\epsilon}_{1}}{p^{1-\epsilon}_{2}},
\end{align*}
where $k=\lceil s \rceil$ and $\epsilon = s-k+1$. Since $p^{\epsilon}_{1}$ is continuous and strictly increasing in  $\epsilon$, and $p^{1-\epsilon}_{2}$ is continuous and strictly decreasing, then $\mu(s)$ is continuous and strictly increasing for $s \in [k-1,k]$. Hence there exists a unique $s^{*}$ and unique $\epsilon^{*}=s^{*}-(k-1)$ such that $\mu(s^{*})=\frac{p^{\epsilon^{*}}_{1}}{p^{1-\epsilon^{*}}_{2}}=\alpha_{k}=\frac{g_{{k}}(h_{k2})}{g_{{k}}(h_{k1})}$. These prices are obtained by user $k$ asking for $\epsilon^{*} q^{*}_{k1}=\epsilon^{*}(\frac{a_{i}}{p^{\epsilon^{*}}_{1}}-g_{k}(h_{k1}))$ from network 1 and $(1-\epsilon^{*}) q^{*}_{k2}=(1-\epsilon^{*})(\frac{a_{i}}{p^{1-\epsilon^{*}}_{2}}-g_{k}(h_{k2}))$ from network 2. Notice that in this case the utility of user $k$ is:
\begin{align*}
u_{k}=&\log \left(1+\frac{q_{k1}}{g_{k}(h_{k1})}+\frac{q_{k2}}{g_{k}(h_{k2})}\right) - p_{1}q_{k1}- p_{2}q_{k2} \\
=&\log \left(1+\frac{\epsilon^{* }q^{*}_{k1}}{g_{k}(h_{k1})}+\frac{(1-\epsilon^{*})q^{*}_{k2}}{g_{k}(h_{k2})}\right) - p^{\epsilon^{*}}_{1}\epsilon^{*}q^{*}_{k1}- p^{1-\epsilon^{*}}_{2}(1-\epsilon^{*})q^{*}_{k2} \\
=& \log \left( \epsilon^{*} \frac{a_{i}}{p^{\epsilon^{*}}_{1}g_{k}(h_{k1})}+(1-\epsilon^{*})\frac{a_{i}}{p^{1-\epsilon^{*}}_{2}g_{k}(h_{k2})}\right)-\epsilon^{*}p^{\epsilon^{*}}_{1}q^{*}_{k1}- (1-\epsilon^{*})p^{1-\epsilon^{*}}_{2}q^{*}_{k2} \\
= &u^{*}_{k1}=u^{*}_{k2}, 
\end{align*}
where the equalities in the last line come from the fact that $p^{\epsilon^{*}}_{1}g_{k}(h_{k1})=p^{1-\epsilon^{*}}_{2}g_{k}(h_{k2})$. Hence, user $k$ is indeed obtaining the maximum utility. It is important to notice that, at prices $p^{\epsilon^{*}}_{1}$ and $p^{1-\epsilon^{*}}_{2}$ user $k$ can split his demand in any way (i.e. choose any value of $\epsilon \in [0,1]$) and still obtain the same utility. However, only by choosing $\epsilon^{*}$ he is making sure that the networks actually have the required amount of resource he asks for, i.e. that the total demand for each network equals the total supply.

As for the remaining users, notice that for all $i \in \mathcal{I}_{1}$, it holds that $\alpha_{i} < \alpha_{k}=\mu(s^{*})$ so users associated with network 1 are indeed associated with their preferred network, and getting utility maximizing resource from it. Similar statement is true for users from $\mathcal{I}_{2}$. Hence, with prices set to $p^{\epsilon^{*}}_{1}$ and $p^{1-\epsilon^{*}}_{2}$ the networks have no incentive to change prices, and the users are associated to their preferred network (expect user $k$ who is indifferent between the two and gets resource from both).

What remains is to consider the general case where there exist users whose demand drops to zero for some $p_{j} \in (p^{*}_{j}(\mathcal{I}_{j}),p^{*}_{j}(\mathcal{I}_{j}\cup \{k\}))$. In this case we need to define the $\epsilon$ price in a piecewise manner. 

Let $\breve{\mathcal{I}}_{j}=\{i \in \mathcal{I}_{j}: \frac{a_{i}}{p^{*}_{1}(\mathcal{I}_{1}\cup \{k\})}-g_{i}(h_{ij})<0\}\,j=1,2$, i.e. $\breve{\mathcal{I}}_{j}$ is the set of all users who are dissuaded from communicating if user $k$ joins network $j$. Also, let $\breve{p}_{i}\in (p^{*}_{j}(\mathcal{I}_{j}),p^{*}_{j}(\mathcal{I}_{j}\cup \{k\}))$ be such that $\frac{a_{i}}{\breve{p}_{i}}-g_{i}(h_{ij})=0$ for $i \in \breve{\mathcal{I}}_{j},\,j=1,2$. The price $\breve{p}_{i}$ is the lowest price for which the demand of user $i$ becomes zero (it is understood which network the user $i$ belongs to).

We will focus on network 1 only, but the same argument holds for network 2. As mentioned before, the optimal price for the set of users $\mathcal{I}_{1}=\{1,\ldots, k-1\}$ is
\begin{align*}
p^{*}_{1}(\mathcal{I}_{1})=\frac{\sum_{i \in \mathcal{I}_{1}}a_{i}}{\sum_{i \in \mathcal{I}_{1}}g_{i}(h_{i1})+Q_{1}}.
\end{align*}
Assume that $i_{1},\ldots,i_{|\breve{\mathcal{I}}_{1}|}$ are such that $\breve{p}_{i_{1}}<\breve{p}_{i_{1}}<\ldots<\breve{p}_{i_{|\breve{\mathcal{I}}_{1}|}}$. Then we can define 
\begin{align*}
p^{\epsilon}_{1}(\mathcal{I}_{1}) =\frac{\sum_{i \in \mathcal{I}_{1}}a_{i}+\epsilon a_{k}}{\sum_{i \in \mathcal{I}_{1}}g_{i}(h_{i1})+\epsilon g_{k}(h_{k1})+Q_{1}}, \text{ for } \epsilon \in [0,\breve{\epsilon}_{i_{1}}], 
\end{align*}
where $\breve{\epsilon}_{i_{1}}$ is such that 
\begin{align*}
\breve{p}_{i_{1}} =\frac{\sum_{i \in \mathcal{I}_{1}}a_{i}+\breve{\epsilon}_{i_{1}} a_{k}}{\sum_{i \in \mathcal{I}_{1}}g_{i}(h_{i1})+\breve{\epsilon}_{i_{1}} g_{k}(h_{k1})+Q_{1}}, 
\end{align*}
i.e.
\begin{align*}
\breve{\epsilon}_{i_{1}}=\frac{\breve{p}_{i_{1}} \left( \sum_{i\in \mathcal{I}_{1}} g_{i}(h_{i1})+Q_{1}\right)-\sum_{i\in \mathcal{I}_{1}}a_{i}}{a_{k}-\breve{p}_{i_{1}}g_{k}(h_{k1})}.
\end{align*}
For $\epsilon \in [0,\breve{\epsilon}_{i_{1}})$ no user has his demand drop down to zero, and at $\epsilon=\breve{\epsilon}_{i_{1}}$ the price $p^{\epsilon}_{1}(\mathcal{I}_{1})$ exactly equals $\breve{p}_{i_{1}} =\frac{a_{i_{1}}}{g_{i_{1}}(h_{i_{1}1})}$. At this point the user $i_{1}$ (the user with the lowest valuation for the resource) can be excluded from the optimal price formula since
\begin{align*}
\breve{p}_{i_{1}} =\frac{\sum_{i \in \mathcal{I}_{1}}a_{i}+\breve{\epsilon}_{i_{1}} a_{k}}{\sum_{i \in \mathcal{I}_{1}}g_{i}(h_{i1})+\breve{\epsilon}_{i_{1}} g_{k}(h_{k1})+Q_{1}}=\frac{\sum_{i \in \mathcal{I}_{1}\setminus \{i_{1}\}}a_{i}+\breve{\epsilon}_{i_{1}} a_{k}}{\sum_{i \in \mathcal{I}_{1}\setminus \{i_{1}\} }g_{i}(h_{i1})+\breve{\epsilon}_{i_{1}} g_{k}(h_{k1})+Q_{1}},
 \end{align*}
which follows directly from  $\breve{p}_{i_{1}} =\frac{a_{i_{1}}}{g_{i_{1}}(h_{i_{1}1})}$. Similarly, we can now define
\begin{align*}
p^{\epsilon}_{1}(\mathcal{I}_{1}) =\frac{\sum_{i \in \mathcal{I}_{1}\setminus \{i_{1}\}}a_{i}+{\epsilon} a_{k}}{\sum_{i \in \mathcal{I}_{1}\setminus \{i_{1}\} }g_{i}(h_{i1})+{\epsilon} g_{k}(h_{k1})+Q_{1}}, \epsilon \in [\breve{\epsilon}_{i_{1}}, \breve{\epsilon}_{i_{2}}]. 
 \end{align*}
In general, we can define the fractional equilibrium price as:
\begin{align*}
p^{\epsilon}_{1}(\mathcal{I}_{1}) =\frac{\sum_{i \in \mathcal{I}_{1}\setminus \{i_{1},\ldots,i_{l}\}}a_{i}+{\epsilon} a_{k}}{\sum_{i \in \mathcal{I}_{1}\setminus \{i_{1},\ldots,i_{l}\} }g_{i}(h_{i1})+{\epsilon} g_{k}(h_{k1})+Q_{1}}, \epsilon \in [\breve{\epsilon}_{i_{l}}, \breve{\epsilon}_{i_{l+1}}], 
 \end{align*}
where $\breve{\epsilon}_{i_{0}}=0$, $\breve{\epsilon}_{i_{|\breve{\mathcal{I}}_{1}|+1}}=1$ and for $l \in \{1,\ldots,|\breve{\mathcal{I}}_{1}|\}$: 
\begin{align*}
\breve{\epsilon}_{i_{l}}=\frac{\breve{p}_{i_{l}} \left( \sum_{i\in \mathcal{I}_{1}\setminus \{i_{1},\ldots,i_{l}\}} g_{i}(h_{i1})+Q_{1}\right)-\sum_{i\in \mathcal{I}_{1}\setminus \{i_{1},\ldots,i_{l}\}}a_{i}}{a_{k}-\breve{p}_{i_{l}}g_{k}(h_{k1})}.
\end{align*}
Note that $p^{\epsilon}_{1}(\mathcal{I}_{1})$ defined in this way is continuous and strictly increasing. We can similarly define $p^{\epsilon}_{2}(\mathcal{I}_{2})$ to be continuous and strictly increasing for all $\epsilon \in [0,1]$ and then define the fractional equilibrium price ratio $\mu(s)=\frac{p^{\epsilon}_{1}(\mathcal{I}_{1})}{p^{1-\epsilon}_{2}(\mathcal{I}_{2})}$ as before. The rest of the proof is identical as for the case where no users see their demand drop down to zero.

\subsection{Social optimality of provider competition (proof of Theorem \ref{th:PoA})}
\label{app:socialOptimality}

Let $v_{i}(x)$ be a strictly concave, strictly increasing, continuous differentiable everywhere function on $\mathbb{R}$. Let $x_{i}=\frac{q_{i1}}{g_{i1}}+\frac{q_{i2}}{g_{i2}}$, where $g_{ij}\stackrel{\triangle}{=}g_{i}(h_{ij})$, i.e. the user's utility is a function of the obtained resource, scaled by the quality offset factor for the provider he is buying this resource from. If networks are charging prices $p_{1},p_{2}$ per unit resource, then the utility of user $i$ is $u_{i}(q_{i1},q_{i2})=v_{i}(\frac{q_{i1}}{g_{i1}}+\frac{q_{i2}}{g_{i2}})-p_{1}q_{i1}-p_{2}q_{i2}$. By choosing $v_{i}(x)=a_{i}\log(1+x)$ we recover the utility functions used in the paper. However, the proof is more general so we give it in those terms.\\

\begin{theorem}
\label{thm:system}
There exist unique prices $p_{1},p_{2}$ and, corresponding to these prices,  a unique demand vector $\mathbf{q}=[q_{11}\; q_{12}\; \cdots \; q_{I1}\; q_{I2}]$ such that $q_{i1},q_{i2}$ are the utility maximizing resource quantities for user $i$ at prices $p_{1},p_{2}$ and such that $\sum_{i=1}^{I}q_{i1}=Q_{1}$ and $\sum_{i=1}^{I}q_{i2}=Q_{2}$. Furthermore, there can be at most one user with $q_{i1}q_{i2}>0$. 
\end{theorem}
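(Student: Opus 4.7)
The plan is to recast the assertion as a social welfare maximization program and exploit convex duality. Define the aggregate $x_i \stackrel{\triangle}{=} q_{i1}/g_{i1} + q_{i2}/g_{i2}$ and consider
\begin{align*}
\max_{\mathbf{q}\geq 0}\; \sum_{i\in\mathcal{I}} v_i(x_i)\quad \text{subject to}\quad \sum_{i\in\mathcal{I}} q_{ij}\leq Q_j,\; j=1,2.
\end{align*}
The feasible set is compact and nonempty and the objective is concave and continuous, so an optimum exists; Slater's condition holds (e.g.\ $q_{ij}=Q_j/(2I)$ is strictly feasible), so the KKT conditions are both necessary and sufficient. Letting $p_j\geq 0$ be the multiplier on the $j$-th capacity constraint, stationarity reads, for every user $i$ and provider $j$, $v_i'(x_i^\ast)/g_{ij}\leq p_j$ with equality whenever $q_{ij}^\ast>0$. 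Since $v_i$ is strictly increasing, $v_i'>0$, which forces $p_j>0$, so complementary slackness yields market clearing $\sum_i q_{ij}^\ast = Q_j$.

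First I would identify these KKT conditions with the equilibrium notion in the theorem statement. The stationarity inequalities are exactly the first-order optimality conditions for each user's individual problem $\max_{q_{i1},q_{i2}\geq 0} v_i(x_i)-p_1 q_{i1}-p_2 q_{i2}$, and by concavity of $u_i$ these conditions are sufficient for the user's optimum. Hence a social-welfare maximizer induces prices that make each user's allocation utility-maximizing and clear both markets, and conversely every such market-clearing equilibrium is a KKT point and therefore optimal. Existence follows immediately from the existence of a maximizer.

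For uniqueness, strict concavity of $v_i$ makes $v_i'$ strictly decreasing, so the scalar $x_i^\ast$ is uniquely determined by $v_i'(x_i^\ast)=p_j g_{ij}$, taking $j$ to be any provider from which user $i$ buys a positive amount. The prices are then recovered by picking, for each $j$, any user $i$ with $q_{ij}^\ast>0$ (one must exist since the resource constraint binds) and setting $p_j = v_i'(x_i^\ast)/g_{ij}$; hence $(p_1,p_2)$ are the same at every optimum. For the \emph{at most one splitter} claim, any user $i$ with $q_{i1}^\ast,q_{i2}^\ast>0$ satisfies both stationarity equalities, yielding $p_1 g_{i1}=p_2 g_{i2}$, i.e.\ $\alpha_i=p_2/p_1$. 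Because the $h_{ij}$ are drawn from a continuous distribution, the values $\alpha_i$ are almost surely pairwise distinct, so at most one user can lie on the cut. Every other user's pair is then $(g_{ij_i^\ast}x_i^\ast,0)$ or $(0,g_{ij_i^\ast}x_i^\ast)$ with $j_i^\ast$ picked by comparing $\alpha_i$ to $p_2/p_1$, and the unique splitter's split is pinned down by the two residual resource-clearing equations.

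The main obstacle I anticipate is establishing uniqueness of the prices despite the fact that the social-welfare objective is strictly concave only in the aggregates $x_i$, not jointly in $(q_{i1},q_{i2})$. The argument above sidesteps this by first extracting unique $x_i^\ast$ from strict concavity of $v_i$ and then reading off $(p_1,p_2)$ from one active user per provider, using the continuous-distribution hypothesis to ensure the splitter (if any) is unique so that market clearing fixes the ambiguous coordinates.
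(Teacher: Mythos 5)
Your overall strategy is the same as the paper's (recast the equilibrium as the welfare problem SYSTEM, use KKT to identify Lagrange multipliers with prices, and use the almost-sure distinctness of the ratios $\alpha_i=g_{i1}/g_{i2}$ to get at most one splitter), but the step on which everything hinges --- that the prices, and hence the allocations, agree across \emph{all} maximizers --- is not actually proved in your write-up. As written, the argument is circular: you first say that $x_i^\ast$ is ``uniquely determined by $v_i'(x_i^\ast)=p_jg_{ij}$,'' which determines the aggregate \emph{from the prices}, and you then recover the prices \emph{from} $x_i^\ast$ via $p_j=v_i'(x_i^\ast)/g_{ij}$ for an active user. This only shows that within one optimum the prices and aggregates are consistent with each other; it does not rule out a second maximizer $\mathbf{q}'$ with different aggregates, different active users, and different multipliers, which is precisely the possibility the theorem's uniqueness claim must exclude (note the objective is not strictly concave in $\mathbf{q}$, so uniqueness of the maximizer is not automatic). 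Strict concavity of each $v_i$ alone, invoked pointwise, does not deliver the across-optima statement you assert in your closing paragraph.

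The gap is fixable and the fix keeps your route cleaner than the paper's: since $\mathbf{q}\mapsto\mathbf{x}$ is linear and the feasible set is convex, the image of the feasible set in $\mathbf{x}$-space is convex, and $\sum_i v_i(x_i)$ is strictly concave there; a midpoint argument then shows every maximizer of SYSTEM induces the same $\mathbf{x}^\ast$. With $\mathbf{x}^\ast$ fixed, take two optima with multipliers $p_j$ and $p_j'$, pick at each optimum a user with $q_{ij}>0$ (one exists because the constraint binds), and combine the stationarity equality at that optimum with the stationarity \emph{inequality} for the same user at the other optimum to get $p_j\le p_j'$ and $p_j'\le p_j$, hence equal prices; the rest of your argument (non-splitters determined by comparing $\alpha_i$ with $p_2/p_1$, at most one splitter by the continuous-distribution argument, the splitter's split pinned down only through the market-clearing equations, i.e.\ by provider-side rationing rather than prices alone) then goes through and matches the paper's Appendix. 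By contrast, the paper proves uniqueness differently: it shows each maximizer determines its multipliers, assumes two distinct maximizers, and runs a case analysis on the price ratios ($\frac{p_1^\ast}{p_2^\ast}$ equal to, greater than, or less than $\frac{p_1'}{p_2'}$) to contradict market clearing --- a longer argument that your unique-$\mathbf{x}^\ast$ lemma would replace, but only once that lemma is actually stated and proved.
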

\begin{proof}
Consider the utility maximization problem SYSTEM$(\mathbf{v},\mathbf{g},Q)$:
\begin{align}
\max\; & \sum_{i=1}^{I}v_{i}\left(\frac{q_{i1}}{g_{i1}}+\frac{q_{i2}}{g_{i2}}\right) \label{eqn:max} \\
\text{subject to }& \sum_{i=1}^{I}q_{ij}=Q_{j},\; j=1,2 \label{eqn:clearing}\\
\text{over } & q_{ij}\geq0 \; \forall i,j \nonumber
\end{align}

Strictly speaking, the constraint in (\ref{eqn:clearing}) should be a constraint on a larger set: $\sum_{i=1}^{I}q_{ij}\leq Q_{j},\; j=1,2$. However, since the functions $v_{i}(\cdot)$ are strictly increasing, we can improve the utility sum obtained for any demand vector that satisfies $\sum_{i=1}^{I}q_{ij}< Q_{j}$ by assigning the remaining resource $\left(Q_{j}-\sum_{i=1}^{I}q_{ij}\right)$ arbitrarily to some user. Hence, the maximizing solution will be at the boundary and the constraint set is characterized by (\ref{eqn:clearing}).

The maximization problem SYSTEM is a concave maximization problem over a convex feasible region, so we know that the maximum is attained. However, there may be several maximizing values. In the following we prove that the maximizing vector $\mathbf{q}$ is unique.

The Lagrangian for this maximization problem is
\begin{align}
L(\mathbf{q},\mathbf{p})=& \sum_{i=1}^{I}v_{i}\left(\frac{q_{i1}}{g_{i1}}+\frac{q_{i2}}{g_{i2}}\right) + \sum_{j=1}^{2}p_{j}\left( Q_{j}-\sum_{i=1}^{I}q_{ij}\right).
\end{align}
When we differentiate the Lagrangian with respect to the variables of interest, we obtain:
\begin{align}
\frac{\partial v_{i}(\frac{q_{i1}}{g_{i1}}+\frac{q_{i2}}{g_{i2}})}{\partial q_{ij}}=\frac{\partial v_{i}(x_{i})}{\partial x_{i}}\frac{1}{g_{ij}}-p_{j} = &0,\;q_{ij }>0 \label{eqn:derivZero} \\
\leq & 0,\; q_{ij} =0, \label{eqn:derivLessZero}
\end{align}
where with some abuse of notation we write $\frac{\partial v_{i}(x_{i})}{\partial x_{i}}$ instead of $\frac{\partial v_{i}(x)}{\partial x}\Big\vert_{x=x_{i}=\frac{q_{i1}}{g_{i1}}+\frac{q_{i2}}{g_{i2}}}$. In particular, from (\ref{eqn:derivZero}) and (\ref{eqn:derivLessZero}) we can see that 
\begin{align*}
\frac{\partial v_{i}(x_{i})}{\partial x_{i}}\leq p_{1}g_{i1} \qquad \frac{\partial v_{i}(x_{i})}{\partial x_{i}}\leq p_{2}g_{i2} , \forall \; i \in \mathcal{I}
\end{align*}
and in particular
\begin{align*}
\frac{\partial v_{i}(x_{i})}{\partial x_{i}}= & \min(p_{1}g_{i1},p_{2}g_{i2}),\; q_{i1}>0\text{ or }q_{i2}>0 \\
\leq & \min(p_{1}g_{i1},p_{2}g_{i2}),\; q_{i1}=q_{i2}=0.
\end{align*}
Furthermore, if $p_{1}g_{i1} \neq p_{2}g_{i2}$ we have $q_{i1}q_{i2}=0$, i.e. user $i$ gets the resource from at most one of the providers. The condition $p_{1}g_{i1} = p_{2}g_{i2}$ can only be fulfilled for one user, since the probability that $\frac{g_{i1}}{g_{i2}}=\frac{g_{k1}}{g_{h2}}$ is of zero measure, summed over all pairs $i,k \in \mathcal{I}$. So, at most one user can have $q_{i1}q_{i2}>0$ for any $p_{1},p_{2}$. Also, notice that $q_{i1}=0$ if $\frac{g_{i1}}{g_{i2}}>\frac{p_{2}}{p_{1}}$ and similarly $q_{i2}=0$ if $\frac{g_{i1}}{g_{i2}}<\frac{p_{2}}{p_{1}}$. We say that the Lagrange multiplier ratio determines the preferred network of a user.

We know that a maximizing solution of SYSTEM exists, let us denote it $\mathbf{q}^{*}$. We want to prove that this solution is unique. To do this, we first show that a maximizing solution identifies a unique par of Lagrange multipliers $p^{*}_{1},p^{*}_{2}$. We then assume that there are two maximizing solution and show that this leads to a contradiction. The contradiction comes from the fact that values of the Lagrangian multipliers other than the maximizing ones will lead to values of $\mathbf{q}$ that violate the equality in (\ref{eqn:clearing}). 

Suppose that $\mathbf{q}^{*}$ is the maximizing solution of SYSTEM. We want to show that there is exactly one pair of Lagrange multipliers $\mathbf{p}^{*}$ associated with $\mathbf{q}^{*}$ and that for any other maximizing solution $\mathbf{q}'$ the associated Lagrange multipliers $\mathbf{p}'$ are different than $\mathbf{p}^{*}$. There exist at least one $i$ and $k$ such that $q^{*}_{i1}>0$ and $q^{*}_{k2}>0$ (otherwise (\ref{eqn:clearing}) is not satisfied). Assume also that $i$ and $k$ are such that $q^{*}_{i2}=0$ and $q^{*}_{k1}=0$ \footnote{Such $i$ and $k$ cannot be found only in two special cases (there is only one user with positive demand, buying all resource from both networks; or there are only two users with positive demand, one of which is buying all of the resource from one of the networks). It can be shown that a maximizing $\mathbf{q}$ yields unique $p_{1}^{*}$ and $p_{2}^{*}$ for these cases as well.}.
Then $x^{*}_{i}=\frac{q^{*}_{i1}}{g_{i1}}$ and $x^{*}_{k}=\frac{q^{*}_{k2}}{g_{k2}}$ so we can find $p^{*}_{1}$ and $p^{*}_{2}$ from (\ref{eqn:derivZero}):
\begin{align*}
p^{*}_{1}=g_{i1}\frac{\partial v_{i}(x)}{\partial x} \Big \vert_{x=x_{i}^{*}} \text{ and } p^{*}_{2}=g_{k2}\frac{\partial v_{k}(x)}{\partial x} \Big \vert_{x=x_{k}^{*}}.
\end{align*}
The derivative of $v_{i}$ is a strictly decreasing function, so $\frac{\partial v_{i}(x)}{\partial x}$ is a bijection on $(0,\infty)$. Hence, a solution of SYSTEM corresponds to exactly one price vector $\mathbf{p}=[p^{*}_{1}\; p^{*}_{2}]$. Now consider $\mathbf{q}'\neq\mathbf{q}^{*}$. There exists at least one purchased quantity that is different in the two vectors, i.e. $q_{ij}'\neq q_{ij}^{*} \neq 0$ for some $i$. If $q_{i\bar{j}}'=0=q_{i\bar{j}}^{*}$ (user $i$ only purchases from one network), where $\bar{j}=\{1,2\}\setminus j$, then by recalling that $\frac{\partial v_{i}(x)}{\partial x}$ is a bijection we conclude that $p_{j}^{*} \neq p_{j}'$. If user $i$ purchases from both networks, then $\sum_{k \neq i}^{¥}q_{ij}'=Q_{j}-q_{ij}' \neq Q_{j}-q_{ij}^{*}=\sum_{k \neq i}^{¥}q_{ij}^{*}$. Since  $\sum_{k \neq i}^{¥}q_{ij}'\neq \sum_{k \neq i}^{¥}q_{ij}^{*}$ then there must exist another user $l$ such that $q_{lj}'\neq q_{lj}^{*}$. User $l$ only buys from one network (since there is only one user that buys from both networks, and it is user $i$) so by the previous argument we see that $p_{j}'\neq p_{j}^{*}$. We conclude that different maximizing vectors $\mathbf{q}$ correspond to different Lagrange multipliers $\mathbf{p}$.

Next we prove that only one maximizing solution exists. Suppose again that $\mathbf{q}^{*}$ is the maximizing solution of SYSTEM. 
Remember that the resource provider that the user is getting the resource from is determined solely on the basis of the Lagrange multiplier ratio $\frac{p^{*}_{1}}{p^{*}_{2}}$. We define $\mathcal{I}_{j}=\{i: q^{*}_{ij}>0\}$ where at most one user is in both sets. 

Suppose there is another maximizing solution $\mathbf{q'}$, which is obtained for Lagrange multipliers $p_{1}',p_{2}'$. There are three cases we need to consider: $\frac{p^{*}_{1}}{p^{*}_{2}}=\frac{p_{1}'}{p_{2}'}$, $\frac{p^{*}_{1}}{p^{*}_{2}}>\frac{p_{1}'}{p_{2}'}$, and $\frac{p^{*}_{1}}{p^{*}_{2}}<\frac{p_{1}'}{p_{2}'}$ (the third case boils down to the second case).

Case 1: $\frac{p^{*}_{1}}{p^{*}_{2}}=\frac{p_{1}'}{p_{2}'}$. Assume, without loss of generality, that $p^{*}_{j}<p_{j}'$, $j=1,2$. We can define the sets $\mathcal{I}'_{j}=\{i:q'_{ij}>0\}$, $j=1,2$. Since the price ratio stayed equal, then no user with positive $q_{ij}$ changed their set, i.e. $\mathcal{I}_{j}' \subset \mathcal{I}_{j}$ (some users that previously had $q_{ij}>0$ may now have $q_{ij}=0$ so they are no longer in the set). Also, no user went from $q^{*}_{ij}=0$ to $q'_{ij}>0$, which will be clear from the discussion bellow. For all (except maybe one user who has $q^{*}_{k1}>0, q^{*}_{k2}>0$) $i \in \mathcal{I}_{1}$ we have $q^{*}_{i1}>0,q^{*}_{i2}=0$ for $p^{*}_{1},p^{*}_{2}$ and $q'_{i2}  \geq 0,q_{i2}'= 0$ for $p_{1}',p_{2}'$. From (\ref{eqn:derivZero}) we know that
\begin{align*}
\frac{\partial v_{i}(x)}{\partial x}\Big\vert_{x=\frac{q^{*}_{i1}}{g_{i1}}}=p^{*}_{1}g_{i1} \text{ and } \frac{\partial v_{i}(x)}{\partial x}\Big\vert_{x=\frac{q'_{i1}}{g_{i1}}}=p_{1}'g_{i1}.
\end{align*}
Since $v_{i}$ is a strictly concave function of $x$, then its derivative is a strictly decreasing function of $x$. Hence, $p^{*}_{1}g_{i1}<p_{1}'g_{i1}$ implies $q^{*}_{i1} > q'_{i1}$. This is true for all $i \in \mathcal{I}_{1}$ and it is also true that $q^{*}_{i2}> q_{i2}'$ for all $i \in \mathcal{I}_{2}$. This may not be true for the one user who has $q^{*}_{k1}>0, q^{*}_{k2}>0$ (such a user may or may not exist depending on the problem instantiation). For this user $p^{*}_{1}$ and $p^{*}_{2}$ do not uniquely define $q^{*}_{k1}$ and $q_{k2}^{*}$. However, by considering (\ref{eqn:derivZero}) we see that it cannot be that $q^{*}_{i1}<q_{i1}'$ and $q^{*}_{i2}<q_{i2}'$, again since the derivative of $v_{k}$ is a decreasing function of $x_{k}$. So, either $q^{*}_{i1}>q_{i1}'$ or $q^{*}_{i2}>q_{i2}'$. Suppose $q^{*}_{i1}>q_{i1}'$ which leads to $Q_{1}=\sum_{i=1}^{I}q^{*}_{i1}>\sum_{i=I}^{I}q'_{i1}$. Then, vector $\mathbf{q'}$ is not the maximizing vector since it violates equality (\ref{eqn:clearing}). Similar reasoning holds for $p^{*}_{j}>p_{j}'$ $j=1,2$. Cases  2 and 3 are dealt in a similar manner.

Case 2: $\frac{p^{*}_{1}}{p^{*}_{2}}>\frac{p_{1}'}{p_{2}'}$. Then, either $p^{*}_{1}>p_{1}'$ or $p_{2}<p_{2}'$ (or both). Assume $p^{*}_{1}>p_{1}'$. Due to the decrease in the price ratio, there will be two effects: some users who are in $\mathcal{I}_{2}$ may switch to $\mathcal{I}_{1}'$. Also, for all $i \in \mathcal{I}_{1}'$, $q_{i1}^{*}<q'_{i1}$. This implies, $Q_{1}=\sum_{i=1}^{I}q^{*}_{i1}<\sum_{i=I}^{I}q'_{i1}$ and hence $\mathbf{q}'$ cannot be the maximizing vector. Similar reasoning goes for $p^{*}_{2}<p_{2}'$. The argument does not depend on whether there is a user who has who has $q^{*}_{k1}>0, q^{*}_{k2}>0$.
Case 3 is analogous to Case 2. We conclude that there exists a unique maximizing solution to SYSTEM, and a corresponding unique pair $p^{*}_{1}, p^{*}_{2}$ of Lagrange multipliers (we write $\mathbf{p}^{*}=[p_{1}^{*}\;p_{2}^{*}]$). 

The conditions to be fulfilled by the maximizing solution $\mathbf{q}$ for some $p_{1},p_{2}$ can be summarized as follows:
\begin{align}
\frac{\partial v_{i}(x_{i})}{\partial x_{i}}\frac{1}{g_{ij}}-p_{j} \leq &0,\; j=1,2;\; i=1,\ldots,I \label{eqn:kkt1} \\
q_{ij}\left(\frac{\partial v_{i}(x_{i})}{\partial x_{i}}\frac{1}{g_{ij}}-p_{j} \right)=&0,\; j=1,2;\; i=1,\ldots,I \label{eqn:kkt2}\\
\frac{q_{i1}}{g_{i1}}+\frac{q_{i2}}{g_{i2}} = & x_{i},\; i=1,\ldots,I \label{eqn:kkt3}\\
\sum_{i=1}^{I}q_{ij}=&Q_{j},\; j=1,2 \label{eqn:kkt4}\\
p_{j}>0,\; q_{ij} \geq& 0\; j=1,2;\; i=1,\ldots,I. \label{eqn:kkt5}
\end{align}
Hence, any vector $\mathbf{q}$ that maximizes SYSTEM fulfills equations (\ref{eqn:kkt1})-(\ref{eqn:kkt5}) for some $\mathbf{p}=[p_{1}\;p_{2}]$; conversely any vector tuple $\mathbf{q},\mathbf{p}$ that fulfills equations (\ref{eqn:kkt1})-(\ref{eqn:kkt5}) has the property that $\mathbf{q}$ is the solution to SYSTEM. Above we proved that such a vector tuple $\mathbf{q},\mathbf{p}$ exists and is unique.

The proof does not end here as we now need to show that the Lagrange multipliers, when interpreted as prices, will lead to users choosing the correct values of $q_{ij}$. 

Assume that $p^{*}_{1},p^{*}_{2}>0$ are the Lagrange multipliers identified with the vector that maximizes SYSTEM. Assume that the two networks are then charging prices $p^{*}_{1},p^{*}_{2}$ to their users, such that each user is facing a problem USER$_{i}(v_{i},p^{*}_{1},p^{*}_{2})$:
\begin{align}
\max & \; v_{i}\left(\frac{q_{i1}}{g_{i1}}+\frac{q_{i2}}{g_{i2}}\right)-\sum_{j=1}^{2}p^{*}_{j}q_{ij}\\
\text{over }& q_{ij}\geq 0 \nonumber
\end{align}
Then the maximizing condition is the same as (\ref{eqn:derivZero}) and (\ref{eqn:derivLessZero}):
\begin{align*}
\frac{\partial v_{i}(\frac{q_{i1}}{g_{i1}}+\frac{q_{i2}}{g_{i2}})}{\partial q_{ij}}=\frac{\partial v_{i}(x_{i})}{\partial x_{i}}\frac{1}{g_{ij}}-p^{*}_{j} = &0,\;q_{ij }>0  \\
\leq & 0,\; q_{ij} =0. 
\end{align*}
In particular, for all users (except possibly one) the maximizing values $q^{'}_{i1},q^{'}_{i2}$ are \textit{uniquely} defined and identical to the ones maximizing SYSTEM (since they come from the same set of equations). However, if there is a user $k$ such that $\frac{g_{k1}}{g_{k2}}=\frac{p^{*}_{2}}{p^{*}_{1}}$ then there are many values of $q_{k1}, q_{k2}$  that fulfill 
\begin{align*}
\frac{\partial v_{i}(x)}{\partial x}\Big\vert_{x=\frac{q_{i1}}{g_{i1}}+\frac{q_{i2}}{g_{i2}}}\frac{1}{g_{ij}}-p^{*}_{j} = 0 , j=1,2
\end{align*}
so in general \textbf{it is not enough that users be given prices alone} in order to find the social utility maximizing values of $\mathbf{q}'$. What is required is that the networks impose the supply limiting constraint: $\sum_{i=1}^{I}q_{ij}=Q_{j}$. Then, the remaining two values can be \textit{uniquely} found as $q'_{kj}=Q_{j}-\sum_{i\neq k}^{¥}q'_{ij}$.

Then, the vector $\mathbf{q'}$ is \textbf{unique}, and satisfies the following conditions:
\begin{align}
\frac{\partial v_{i}(x_{i})}{\partial x_{i}}\frac{1}{g_{ij}}-p^{*}_{j} \leq &0,\; j=1,2;\; i=1,\ldots,k-1,k+1,\ldots,I \label{eqn:pricing1} \\
q_{ij}\left(\frac{\partial v_{i}(x_{i})}{\partial x_{i}}\frac{1}{g_{ij}}-p^{*}_{j} \right)=&0,\; j=1,2;\; i=1,\ldots,k-1,k+1,\ldots,I \label{eqn:pricing2} \\
\frac{q_{i1}}{g_{i1}}+\frac{q_{i2}}{g_{i2}} = & x_{i},\; i=1,\ldots,I \label{eqn:pricing3}\\
\sum_{i=1}^{I}q_{ij}=&Q_{j},\; j=1,2 \label{eqn:pricing4}\\
p^{*}_{j}>0,\; q_{ij} \geq& 0\; j=1,2;\; i=1,\ldots,I. \label{eqn:pricing5} 
\end{align}
Equations (\ref{eqn:pricing1})-(\ref{eqn:pricing5}) are almost identical to equations (\ref{eqn:kkt1})-(\ref{eqn:kkt5}), except that we cannot claim yet that (\ref{eqn:pricing1}) and (\ref{eqn:pricing2}) are valid for user $k$ (it will turn out that they are). It remains to be shown that $q_{k1},q_{k2}$ constructed from $q^{*}_{kj}=Q_{j}-\sum_{i\neq k}^{¥}q^{*}_{ij}$ are also fulfilling equations (\ref{eqn:kkt1}) and (\ref{eqn:kkt2}).

We know that  there is a unique vector tuple $\mathbf{q}^{*}, \mathbf{p^{*}}$ such that $\mathbf{q}^{*}$ maximizes SYSTEM. Then, $\mathbf{q}^{*}$ fulfills all the conditions (\ref{eqn:kkt1})-(\ref{eqn:kkt5}) for $p^{*}_{1},p^{*}_{2}$. At the same time, $\mathbf{q}^{*}$ satisfies conditions (\ref{eqn:pricing1})-(\ref{eqn:pricing5}) (since they are a subset of conditions (\ref{eqn:kkt1})-(\ref{eqn:kkt5}) for the maximizing $p^{*}_{1},p^{*}_{2}$). On the other hand, as we have already shown, there exists a unique vector $\mathbf{q}'$ that satisfies (\ref{eqn:pricing1})-(\ref{eqn:pricing5}). Hence $\mathbf{q}^{*}=\mathbf{q}'$ and we can conclude that the vector that satisfies (\ref{eqn:pricing1})-(\ref{eqn:pricing5}) also satisfies (\ref{eqn:kkt1})-(\ref{eqn:kkt5}) and hence solves SYSTEM. 
\end{proof}

The Theorem \ref{th:PoA} is a direct corollary of Theorem \ref{thm:system}, and Propositions \ref{pr:integerMCE} and \ref{pro:eps-equilib}. \\

There is a curious resemblance between our problem and the problems treated by Kelly in his seminal paper \cite{Kelly:1997} (hence some of the resemblance in the naming of the maximization problems). In fact, we first considered reducing the problem we are treating to the one that Kelly was solving in his paper. It turns out that it cannot be done. The reason is that in Kelly's paper a single network proposes prices to users, and then these users can find a unique maximizer to their own optimization problem. In our case, a user may exist that cannot provide unique values to maximize his personal optimization problem (i.e. the undecided user). Hence, we necessarily need network intervention in order for this user to split his demand in the right way. 

We can also view our model as a Stackelberg game or a two-stage extensive game. The providers announce the prices during the first stage, and the users determine their resource requests during the second stage. A user $k$'s strategy is contingency plan for all possible choices, and thus a user may have infinite optimal strategies because it has infinite optimal choices of resource requests whenever $p_1 g_{k1} = p_2 g_{k2}$. For a given fixed price pairs $(p_1,p_2)$, however, at most one user will have the chance to choose among its infinite optimal resource requests (there will be at most one undecided user); any other user will have a unique optimal resource request. What we show in Theorem \ref{th:PoA} is that there exists a unique subgame perfect Nash equilibrium where the optimal strategies of the users should lead to balance of supply and demand for each provider at the equilibrium prices.

\end{appendix}

	\bibliography{twoNetCompetitionLong}
	\bibliographystyle{IEEEtran}

\end{document}